\documentclass[11pt]{article}

\usepackage{amssymb,amsmath,amsthm}
\usepackage{graphicx}
\usepackage{float}
\usepackage{subfig}
\usepackage[numbers]{natbib}
\usepackage{calrsfs}
\usepackage[margin=1in]{geometry}
\usepackage{setspace}
\usepackage{hyperref}

\newtheorem{theorem}{Theorem}[section]

\newtheorem{proposition}[theorem]{Proposition}
\newtheorem{corollary}[theorem]{Corollary}
\theoremstyle{definition}
\newtheorem{definition}[theorem]{Definition}
\theoremstyle{remark}

\newcommand{\Lsym}{\mathcal{L}}

\title{Boundary-Value Friedkin–Johnsen Dynamics and Influence-Based Centralities on Networks}

\author{Moses Boudourides \\[0.5em]
{\small School of Professional Studies, Northwestern University} \\
{\small \texttt{Moses.Boudourides@northwestern.edu}}}

\date{}

\begin{document}

\doublespacing

\maketitle

\begin{abstract}
Understanding which nodes are most able to transmit or receive influence is a central problem in networked opinion dynamics, yet topology-only centrality measures do not account for heterogeneous susceptibility to social influence. We study source-specific, susceptibility-dependent influence in the Friedkin--Johnsen model on directed weighted networks. By treating fully stubborn agents as boundary nodes and the remaining agents as interior nodes, we formulate the dynamics as a discrete boundary-value problem. This formulation yields a Green operator and an all-source steady-state response matrix that quantify how a fixed opinion at each source propagates to every target. Building on this response matrix, we define influence-based broadcasting and reception measures that distinguish structural position from model-implied influence. We establish conditions for existence and uniqueness of the steady state, derive transient and steady-state representations, and characterize sensitivity to susceptibility parameters and perturbations of the influence network. For the undirected homogeneous-susceptibility specialization, we also provide a corresponding spectral analysis. Computational experiments on benchmark networks illustrate how heterogeneous susceptibility profiles shape response-based rankings and how these rankings differ from topology-only centralities. The framework provides a basis for studying influence propagation, source selection, and susceptibility-aware network intervention.
\end{abstract}

\noindent\textbf{Keywords:} Friedkin--Johnsen model; opinion dynamics; social influence; influence propagation; influence measures; heterogeneous susceptibility; weighted directed networks; boundary-value problem; Green operator; applied network science.

\section{Introduction}

\subsection{Opinion dynamics and applied network science}

Opinion dynamics provides a network-science framework for studying how attitudes, beliefs, and preferences evolve through interpersonal interaction. A recurring practical question is which nodes are able to transmit influence most effectively and which nodes are most receptive to influence from externally fixed sources. Classical centrality measures quantify structural prominence from network topology, but they do not generally account for agents' heterogeneous susceptibility to social influence, their attachment to initial opinions, or the dynamical pathway through which a source affects a target.

The Friedkin--Johnsen (FJ) model is a foundational model for this setting. It represents each agent's opinion as a balance between an initial private belief and interpersonal influence from neighbouring agents \citep{FJ1990,Friedkin1998,FJ1999,FJ2011}. Our principal mathematical object is an all-source steady-response matrix $R=(R_{ab})$: its $(a,b)$ entry is the steady response of target $b$ to a unit fixed-opinion intervention at source $a$. From $R$, we derive source-side broadcasting, target-side reception, and, where used, dynamic response descriptors. These are model-implied, intervention-oriented quantities under the specified FJ model.

Treating fixed sources as boundary nodes enables a common comparison of candidate broadcasters, target-level reception, and source-specific reach under one explicitly stated dynamical model.

\subsection{Related work}

The FJ model extends the classical DeGroot averaging model by allowing agents to retain attachment to their initial opinions \citep{DeGroot1974,FJ1990,Friedkin1998,FJ1999,FJ2011}. Existing work establishes equilibrium, stubborn-agent, and convergence properties in networked opinion dynamics \citep{Proskurnikov2017,GhaderiSrikant2014}, while related models examine the tension between private opinions and interpersonal influence \citep{BindelKleinbergOren2015}. A complementary literature studies social power and reflected appraisal in evolving influence networks \citep{JiaMirTabatabaeiFriedkinBullo2015,TianEtAl2022,Friedkin2011ReflectedAppraisal}. Our focus is narrower: for a fixed network and susceptibility profile, we construct the steady response produced at every target by each fixed source.

Centrality interpretation depends on the flow process and analytical objective it represents \citep{Borgatti2005,Borgatti2006,BorgattiEverett2006,BoldiVigna2014}. Walk-based communicability, dynamic communicability, and random-walk diffusion likewise provide process-dependent perspectives on reach and ranking \citep{EstradaHatano2008,GrindrodEtAl2011,MasudaPorterLambiotte2017}. The boundary-value viewpoint used here is a modeling analogy, linked to finite-network Dirichlet and absorbing-chain representations \citep{DoyleSnell1984,KemenySnell1960}, not a claim that FJ trajectories are ordinary random walks.

The contribution is an all-source response construction that combines fixed boundary sources, heterogeneous susceptibility, and multistep network propagation. Proposition~\ref{prop:response-resolvent-factorization} then identifies its exact relationship to topology-only resolvent centralities, while the resulting broadcasting and reception measures are evaluated against structural baselines.

\subsection{Contributions and paper organization}

We formulate discrete-time FJ dynamics with fully stubborn sources as a network boundary-value problem, derive transient and steady-state Green-operator representations, and analyze convergence, susceptibility sensitivity, and perturbations of the influence network. We also give an all-source response construction and characterize its precise relation to the FJ resolvent and to a homogeneous Katz-type special case. The resulting target-side influenceability descriptors, source-side broadcasting scores, and target-side reception scores are evaluated against structural baselines across eight observed, benchmark, and controlled synthetic networks. The evaluation reports both agreement and divergence, including top-ten overlap, a model-based source-selection diagnostic, and the effect of matched heterogeneous versus homogeneous susceptibility profiles.

The remainder of the article is organized as follows. Section~2 introduces the graph notation, the FJ boundary-value formulation, and the associated theoretical results. Section~3 specifies the source-indexed computational formulation and all-vertex response calculation. Section~4 defines target-side influenceability descriptors and the dynamics-induced broadcasting and reception measures. Section~5 presents the computational evaluation on benchmark and case-study networks. Section~6 discusses interpretation, limitations, and prospective applications. Section~7 concludes. Supplementary appendices provide technical proofs, computational details, and additional robustness analyses.

\section{Preliminaries: Graph Notation and Influence Dynamics}
\label{sec:preliminaries}

\subsection{Direction convention, state space, and standing assumptions}
\label{subsec:direction-convention}

Let $V=\{1,\ldots,n\}$ be a finite set of agents, and let
$W\in\mathbb{R}^{n\times n}$ be a nonnegative row-substochastic matrix:
\[
W_{ij}\geq 0,
\qquad
\sum_{j\in V}W_{ij}\leq 1
\quad\text{for every }i\in V.
\]
We use a \emph{receive-from convention}: the entry $W_{ij}>0$ means that the opinion of agent $j$ enters the update of agent $i$. Accordingly, when arrows represent the propagation of influence, the directed support graph is
\[
G_W=(V,E_W),
\qquad
(j,i)\in E_W
\quad\Longleftrightarrow\quad
W_{ij}>0.
\]
Thus, an arc $j\to i$ represents potential propagation of influence from $j$ to $i$. In the empirical preprocessing, every row with at least one received edge is normalized to sum to one. A vertex with no receive-from neighbours is retained with an all-zero row; no self-loop is inserted to repair that row. Thus the row-stochastic setting is the special case in which every row has positive total received weight. Self-loops are allowed whenever $W_{ii}>0$.

Let
\[
S=\operatorname{diag}(s_1,\ldots,s_n),
\qquad 0\leq s_i\leq 1,
\]
be the susceptibility matrix. We partition the agents into the boundary set
\[
\partial\Omega=\{i\in V:s_i=0\}
\]
and the interior set
\[
\Omega=V\setminus\partial\Omega=\{i\in V:s_i>0\}.
\]
Boundary agents are fully stubborn: their opinions remain fixed at their initial values. Interior agents are malleable, but an agent with $0<s_i<1$ remains partially anchored to its initial opinion; it is therefore not, in general, fully susceptible to social influence. We assume that $\partial\Omega$ is nonempty whenever a boundary-value formulation is considered.

For $i\in\Omega$, boundary accessibility means that there is a boundary node $b\in\partial\Omega$ and a directed path $b\to\cdots\to i$ in $G_W$. Equivalently, in the row-transition orientation associated with the matrix $W$, node $i$ can reach a row that receives positive weight from the boundary. Additional reachability, positivity, and norm assumptions required by particular results are stated explicitly below.

\subsection{Undirected reversible specialization and Laplacians}
\label{subsec:undirected-specialization}

The general framework permits directed and asymmetric influence matrices. An undirected support graph only means that $W_{ij}>0$ if and only if $W_{ji}>0$; it does \emph{not} by itself imply that $W$ is symmetric. The spectral results below require the stronger reversible random-walk specialization. Specifically, let $A=A^\top\geq 0$ be the weighted adjacency matrix of an undirected graph with positive degrees
\[
d_i=\sum_{j\in V}A_{ij},
\qquad
D=\operatorname{diag}(d_1,\ldots,d_n),
\]
and set
\[
N=D^{-1}A.
\]
Then $N$ is row-stochastic and is reversible with respect to the degree measure. The associated combinatorial and symmetric normalized Laplacians are
\[
L=D-A,
\qquad
\Lsym=D^{-1/2}LD^{-1/2}
=I-D^{-1/2}AD^{-1/2}.
\]
For $\Omega\subseteq V$, the Dirichlet restriction of the symmetric normalized Laplacian is the principal submatrix $\Lsym_{\Omega\Omega}$ \citep{Chung1997}.

The undirected reversible specialization is introduced only to obtain interpretable spectral statements. It is not used to justify results for general directed or heterogeneous influence networks.

\subsection{The DeGroot and Friedkin--Johnsen models}
\label{subsec:degroot-fj}

In the row-stochastic special case, the update
\begin{equation}
v^{t+1}=Wv^t,
\label{eq:degroot-model}
\end{equation}
is the classical DeGroot averaging model \citep{DeGroot1974}; the row-substochastic convention used below accommodates retained zero-receive rows. Here $v^t\in\mathbb{R}^n$ is the vector of opinions at time $t$. The Friedkin--Johnsen (FJ) model extends this update by allowing agents to retain attachment to their initial opinions \citep{FJ1990,Friedkin1998,FJ1999,FJ2011}:
\begin{equation}
v^{t+1}=SWv^t+(I-S)v^0.
\label{eq:FJ_model}
\end{equation}
The parameter $s_i$ quantifies the openness of agent $i$ to interpersonal influence: $s_i=0$ yields a fully stubborn agent, $0<s_i<1$ yields a partially susceptible agent, and $s_i=1$ yields an agent whose update is entirely determined by network influence.

The FJ model has been studied extensively in connection with stubborn agents, equilibrium behavior, convergence, and evolving social power \citep{GhaderiSrikant2014,JiaMirTabatabaeiFriedkinBullo2015,TianEtAl2022}. Our boundary-value formulation fixes the opinions of the fully stubborn agents and analyzes their propagation through the interior under a fixed influence network and a specified susceptibility profile.

\subsection{Boundary-value formulation and block-matrix reduction}
\label{subsec:ibvp}

Write the initial interior opinions as $v^0|_\Omega=\phi$ and the fixed boundary opinions as $v^0|_{\partial\Omega}=\psi$. Because $s_i=0$ for $i\in\partial\Omega$, equation~\eqref{eq:FJ_model} implies
\[
v^t|_{\partial\Omega}=\psi
\qquad\text{for all }t\geq 0.
\]
After ordering the vertices so that $V=\Omega\cup\partial\Omega$, we write
\begin{equation}
W=
\begin{pmatrix}
W_{\Omega\Omega} & W_{\Omega\partial}\\
W_{\partial\Omega} & W_{\partial\partial}
\end{pmatrix},
\qquad
S=
\begin{pmatrix}
S_\Omega & 0\\
0 & 0
\end{pmatrix},
\qquad
v^t=
\begin{pmatrix}
v_\Omega^t\\
\psi
\end{pmatrix}.
\label{eq:block-partition}
\end{equation}
Here, $W_{\Omega\Omega}\in\mathbb{R}^{|\Omega|\times|\Omega|}$ describes endogenous influence among interior agents, while $W_{\Omega\partial}\in\mathbb{R}^{|\Omega|\times|\partial\Omega|}$ records channels through which boundary agents inject opinions into the interior. The vectors satisfy $\phi,v_\Omega^t\in\mathbb{R}^{|\Omega|}$ and $\psi\in\mathbb{R}^{|\partial\Omega|}$.

\begin{theorem}[Interior affine dynamics]
\label{thm:interior-affine}
The interior opinions satisfy the affine recursion
\begin{equation}
v_\Omega^{t+1}
=S_\Omega W_{\Omega\Omega}v_\Omega^t
+S_\Omega W_{\Omega\partial}\psi
+(I_\Omega-S_\Omega)\phi.
\label{eq:interior_dynamics}
\end{equation}
\end{theorem}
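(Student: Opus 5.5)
The plan is to substitute the block partition \eqref{eq:block-partition} into the FJ update \eqref{eq:FJ_model} and read off the interior block row. First, I record the boundary invariance already noted before the statement. For $i\in\partial\Omega$ we have $s_i=0$, so the $i$th row of \eqref{eq:FJ_model} reads $v_i^{t+1}=v_i^0$. By induction on $t$ this gives $v^t|_{\partial\Omega}=\psi$ for all $t\ge 0$. This justifies writing $v^t=(v_\Omega^t,\psi)^\top$ with the same fixed $\psi$ at every time step, which is the only point requiring an argument rather than bookkeeping.

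Next, with vertices ordered as $V=\Omega\cup\partial\Omega$, I compute $SW$ blockwise. Since $S$ is block diagonal with blocks $S_\Omega$ and $0$, we get
\[
SW=\begin{pmatrix} S_\Omega W_{\Omega\Omega} & S_\Omega W_{\Omega\partial}\\ 0 & 0\end{pmatrix},
\qquad
I-S=\begin{pmatrix} I_\Omega-S_\Omega & 0\\ 0 & I_{\partial}\end{pmatrix}.
\]
Multiplying $SW$ by $v^t=(v_\Omega^t,\psi)^\top$ and adding $(I-S)v^0$ with $v^0=(\phi,\psi)^\top$ gives an interior block of $S_\Omega W_{\Omega\Omega}v_\Omega^t+S_\Omega W_{\Omega\partial}\psi+(I_\Omega-S_\Omega)\phi$. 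This is exactly \eqref{eq:interior_dynamics}. As a consistency check, the boundary block is $0+\psi=\psi$, which agrees with the first step.

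I do not expect a genuine obstacle. The only care needed is to note that the reordering of vertices is a simultaneous permutation $P$ applied to $W$, $S$, and $v$. Because $S$ is diagonal, $PSP^\top$ is still diagonal, and so the update \eqref{eq:FJ_model} is invariant under this relabeling. In particular, no assumption on $W$ is needed: substochasticity, reachability, and zero rows all play no role here.
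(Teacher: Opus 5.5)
Your proposal is correct and follows essentially the same route as the paper, which substitutes the block partition into the FJ update and reads off the upper block row. Your explicit check that the boundary block stays equal to $\psi$ (because $s_i=0$ forces $v_i^{t+1}=v_i^0$), and your remark that reordering the vertices is a simultaneous permutation, spell out points the paper leaves implicit.
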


\begin{proof}
Substitution of \eqref{eq:block-partition} into \eqref{eq:FJ_model} gives
\[
\begin{pmatrix}
v_\Omega^{t+1}\\
\psi
\end{pmatrix}
=
\begin{pmatrix}
S_\Omega\bigl(W_{\Omega\Omega}v_\Omega^t+W_{\Omega\partial}\psi\bigr)\\
0
\end{pmatrix}
+
\begin{pmatrix}
(I_\Omega-S_\Omega)\phi\\
\psi
\end{pmatrix},
\]
whose upper block is \eqref{eq:interior_dynamics}.
\end{proof}

From a network perspective, the block $W_{\Omega\Omega}$ governs the circulation of opinions among malleable agents, whereas $W_{\Omega\partial}$ captures the direct pathways through which fixed boundary opinions propagate into the interior. The term $(I_\Omega-S_\Omega)\phi$ represents the continuing pull of interior agents' private initial opinions. Thus, a boundary node may affect an interior target both directly and through multistep propagation across the network.

\subsection{Resolvent formulation, contraction, and perturbations}
\label{subsec:resolvent}

Define the interior iteration matrix
\[
M:=S_\Omega W_{\Omega\Omega}
\]
and the affine input
\[
b:=S_\Omega W_{\Omega\partial}\psi+(I_\Omega-S_\Omega)\phi.
\]
Then equation~\eqref{eq:interior_dynamics} becomes $v_\Omega^{t+1}=Mv_\Omega^t+b$.

\begin{proposition}[Substochastic contraction criterion]
\label{prop:substochastic-contraction}
The matrix $M$ is nonnegative and substochastic. If, in the row-transition graph of $M$, every interior node can reach a row $i$ satisfying
\[
\sum_{j\in\Omega}M_{ij}<1,
\]
then $\rho(M)<1$. In particular, this condition holds if every interior node is boundary-accessible in the propagation graph $G_W$.
\end{proposition}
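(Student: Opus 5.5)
First I record the elementary properties of $M$. Since $S_\Omega$ is diagonal with entries in $(0,1]$ and $W_{\Omega\Omega}\ge 0$, we have $M_{ij}=s_iW_{ij}\ge 0$. For each $i\in\Omega$,
\[
\sum_{j\in\Omega}M_{ij}=s_i\sum_{j\in\Omega}W_{ij}\le s_i\sum_{j\in V}W_{ij}\le 1 .
\]
Thus $M$ is nonnegative and substochastic, and $\|M\|_\infty\le 1$. Call a row $i$ \emph{deficient} if $r_i:=\sum_{j\in\Omega}M_{ij}<1$.

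For the main claim I will bound the row sums of powers of $M$, i.e.\ $(M^k\mathbf 1)_i$. I argue by induction on the reachability distance. Let $\Omega_0$ be the set of deficient rows, and let $\Omega_m$ be the set of nodes that reach $\Omega_0$ in at most $m$ steps of the row-transition graph, where $i\to j$ whenever $M_{ij}>0$. The claim is that $(M^{m+1}\mathbf 1)_i<1$ for $i\in\Omega_m$. Since $0\le M^k\mathbf 1\le\mathbf 1$ for all $k$, it suffices to show the inequality for the least such $m$. If $i$ is deficient, then $(M^{k}\mathbf 1)_i\le (M\mathbf 1)_i<1$ for every $k\ge 1$, because
\[
(M^{k}\mathbf 1)_i=\sum_j M_{ij}(M^{k-1}\mathbf 1)_j\le r_i .
\]
If instead $i$ has an out-neighbour $j$ with $M_{ij}>0$ and $(M^{m}\mathbf 1)_j<1$, then
\[
(M^{m+1}\mathbf 1)_i=\sum_l M_{il}(M^m\mathbf 1)_l<\sum_l M_{il}\le 1 .
\]
The inequality is strict because $M_{ij}>0$ multiplies a term strictly below $1$. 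The same monotonicity shows that $(M^k\mathbf 1)_i<1$ for all $k\ge m+1$.

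Taking $K=|\Omega|$, which covers every reachability distance by hypothesis, gives $\|M^K\|_\infty=\max_i (M^K\mathbf 1)_i<1$. Hence $\rho(M)^K=\rho(M^K)\le\|M^K\|_\infty<1$, so $\rho(M)<1$.

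For the final sentence I translate boundary accessibility into the reachability hypothesis. Suppose $i\in\Omega$ has a path $b\to u_1\to\cdots\to u_r=i$ in $G_W$ with $b\in\partial\Omega$. Choose the last boundary vertex on this path, so that all subsequent vertices are interior. Let $u$ be the interior vertex immediately after it. Then $W_{u b'}>0$ for some $b'\in\partial\Omega$, and so
\[
\sum_{j\in\Omega}W_{uj}\le 1-W_{ub'}<1 .
\]
Multiplying by $s_u\le 1$ shows that $u$ is deficient for $M$. Each arc $u_{k}\to u_{k+1}$ between interior vertices means $W_{u_{k+1}u_k}>0$, and hence $M_{u_{k+1}u_k}>0$ because $s>0$ on $\Omega$. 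Reversing the path therefore gives a walk $i\to\cdots\to u$ in the row-transition graph of $M$, so $i$ reaches a deficient row.

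I expect the main obstacle to be the orientation bookkeeping between $G_W$ and the row-transition graph. Propagation $j\to i$ in $G_W$ corresponds to $M_{ij}>0$, i.e.\ an arc $i\to j$ in the row graph, so paths must be reversed. This needs care, but it is not conceptually hard.
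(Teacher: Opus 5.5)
Your proof is correct and follows the paper's route. You show $M^K\mathbf{1}<\mathbf{1}$ componentwise for some $K$, deduce $\rho(M)^K\le\|M^K\|_\infty<1$, and handle the boundary-accessibility case by reversing the path from its last boundary vertex to reach a deficient interior row. The one difference is that your induction on reachability distance, with explicit $K=|\Omega|$ and the monotonicity $M^{k+1}\mathbf{1}=M^k(M\mathbf{1})\le M^k\mathbf{1}$, spells out the step that the paper only asserts, citing the standard substochastic argument in Seneta as excluding closed classes whose row sums all equal one.
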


\begin{proof}
This is the standard finite substochastic contraction argument \citep{Seneta2006}. 
Since $W$ is row-substochastic and $0\leq s_i\leq 1$, each row sum of $M$ is at most one. For a finite substochastic matrix, the stated reachability condition excludes a closed communicating class whose row sums are all one. Equivalently, there is an integer $m\geq 1$ for which $M^m\mathbf{1}<\mathbf{1}$ componentwise. Hence $\|M^m\|_\infty<1$, and therefore $\rho(M)^m=\rho(M^m)\leq\|M^m\|_\infty<1$ \citep{Seneta2006}.

If an interior node is boundary-accessible in $G_W$, then the reversed path in the row-transition orientation reaches an interior row that receives positive weight from the boundary. Such a row has $\sum_{j\in\Omega}M_{ij}<1$. The first condition therefore follows.
\end{proof}

\begin{theorem}[Existence and uniqueness of the steady state]
\label{thm:steady-state}
Assume that
\begin{equation}
\rho(M)=\rho(S_\Omega W_{\Omega\Omega})<1.
\label{eq:spectral_radius_condition}
\end{equation}
Then there exists a unique steady-state opinion vector
\begin{equation}
v_\Omega^*
=(I_\Omega-S_\Omega W_{\Omega\Omega})^{-1}
\Bigl(S_\Omega W_{\Omega\partial}\psi+(I_\Omega-S_\Omega)\phi\Bigr).
\label{eq:steady_state}
\end{equation}
\end{theorem}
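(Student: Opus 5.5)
The proof reduces the statement to the fixed-point equation of the affine recursion $v_\Omega^{t+1}=Mv_\Omega^t+b$ supplied by Theorem~\ref{thm:interior-affine}. A vector $v_\Omega^*$ is a steady state exactly when it satisfies $v_\Omega^*=Mv_\Omega^*+b$, which is equivalent to the linear system $(I_\Omega-M)v_\Omega^*=b$. The plan has two parts. First, show that $I_\Omega-M$ is nonsingular under \eqref{eq:spectral_radius_condition}; existence and uniqueness then follow at once, and solving the system gives formula \eqref{eq:steady_state}. Second, add the convergence statement that justifies calling this vector the steady state of the dynamics.

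\textbf{Invertibility.} Every eigenvalue $\lambda$ of $M$ satisfies $|\lambda|\le\rho(M)<1$, so $1$ is not an eigenvalue of $M$. Hence $0$ is not an eigenvalue of $I_\Omega-M$, and this matrix is invertible. I will also give the Neumann-series representation. Gelfand's formula, or the argument already used in Proposition~\ref{prop:substochastic-contraction}, yields an integer $m$ with $\|M^m\|<1$ in some operator norm. Therefore $\sum_{k\ge0}M^k$ converges absolutely, and telescoping $(I_\Omega-M)\sum_{k=0}^{K}M^k=I_\Omega-M^{K+1}$ shows that the series equals $(I_\Omega-M)^{-1}$.

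\textbf{Convergence.} Unrolling the recursion gives
\[
v_\Omega^t=M^t\phi+\sum_{k=0}^{t-1}M^k b .
\]
The vector $M^t\phi$ tends to $0$, because $\|M^{jm+r}\|\le\|M^m\|^j\max_{r<m}\|M^r\|$. The partial sums tend to $(I_\Omega-M)^{-1}b=v_\Omega^*$. An equivalent route is to write the error $e^t=v_\Omega^t-v_\Omega^*$, which satisfies $e^{t+1}=Me^t$ and therefore decays.

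\textbf{Main obstacle.} The only delicate point is that $\rho(M)<1$ does not imply $\|M\|<1$ in the $\infty$-norm; the matrix $M$ may contain rows summing to one. This is why the convergence step must pass through a power $M^m$, or through an adapted norm with $\|M\|<1$, instead of claiming that $M$ is a one-step contraction. The remaining steps are routine algebra.
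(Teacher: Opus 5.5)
Your proposal is correct and follows essentially the same route as the paper. Both arguments use $\rho(M)<1$ to get invertibility of $I_\Omega-M$ and a convergent Neumann series, then solve $(I_\Omega-M)v_\Omega^*=b$ for existence, uniqueness, and the formula.

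Your extra convergence step is sound but not required by the statement; the paper treats it separately in Proposition~\ref{prop:error-recursion} and Theorem~\ref{thm:rate-spectral-radius}. One small correction: take the bound $\|M^m\|<1$ from Gelfand's formula, not from the argument of Proposition~\ref{prop:substochastic-contraction}, because that argument relies on a reachability hypothesis that is not assumed here.
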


\begin{proof}
Condition \eqref{eq:spectral_radius_condition} implies that $I_\Omega-M$ is invertible and that the Neumann series $\sum_{k=0}^{\infty}M^k$ converges. Solving the fixed-point equation $v_\Omega^*=Mv_\Omega^*+b$ yields \eqref{eq:steady_state}.
\end{proof}

\begin{definition}[Green operator]
\label{def:green-operator}
Under \eqref{eq:spectral_radius_condition}, define
\begin{equation}
G_S:=(I_\Omega-S_\Omega W_{\Omega\Omega})^{-1}
=\sum_{k=0}^{\infty}(S_\Omega W_{\Omega\Omega})^k.
\label{eq:green-operator}
\end{equation}
\end{definition}

From a network perspective, the Green operator encodes the cumulative effect of all weighted propagation pathways connecting the fixed boundary to interior nodes. Its $k$th Neumann term $(S_\Omega W_{\Omega\Omega})^k$ records propagation through exactly $k$ interior-to-interior update steps, with every visit attenuated by the relevant susceptibility and edge weight. Thus the steady state combines direct boundary injection, repeated interior propagation, and persistence of interior initial opinions. This is analogous to a fundamental matrix with the boundary treated as absorbing \citep{KemenySnell1960,DoyleSnell1984}, but retains the FJ model's initial-opinion anchoring and heterogeneous susceptibility.

\begin{theorem}[Transient solution]
\label{thm:transient-solution}
For every $t\in\mathbb{N}_0$,
\begin{equation}
v_\Omega^t
=M^t\phi+\sum_{k=0}^{t-1}M^k b,
\label{eq:transient-solution}
\end{equation}
where the empty sum is zero when $t=0$.
\end{theorem}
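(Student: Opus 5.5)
The plan is a direct induction on $t$, using only the affine recursion $v_\Omega^{t+1}=Mv_\Omega^t+b$ from Theorem~\ref{thm:interior-affine} (rewritten with $M=S_\Omega W_{\Omega\Omega}$ and $b=S_\Omega W_{\Omega\partial}\psi+(I_\Omega-S_\Omega)\phi$), together with the initial condition $v_\Omega^0=\phi$. No spectral assumption such as \eqref{eq:spectral_radius_condition} is needed, since the claim concerns only finitely many iterates.

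\textbf{Base case.} For $t=0$, the right-hand side of \eqref{eq:transient-solution} is $M^0\phi=\phi$, because the sum is empty. This agrees with $v_\Omega^0=v^0|_\Omega=\phi$.

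\textbf{Inductive step.} Suppose $v_\Omega^t=M^t\phi+\sum_{k=0}^{t-1}M^k b$. Substituting into the recursion gives
\[
v_\Omega^{t+1}=M^{t+1}\phi+\sum_{k=0}^{t-1}M^{k+1}b+b .
\]
Shifting the summation index ($k\mapsto k-1$) turns the sum into $\sum_{k=1}^{t}M^k b$. Absorbing $b=M^0b$ as the $k=0$ term yields $\sum_{k=0}^{t}M^k b$, which is \eqref{eq:transient-solution} at $t+1$.

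There is no real obstacle here. The only point needing care is bookkeeping: the empty-sum convention at $t=0$, and the fact that $M$ and $b$ are time-independent. The latter holds because $\psi$ is constant in time, as established before Theorem~\ref{thm:interior-affine}.

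As a remark, one could add that when $\rho(M)<1$ we have $M^t\phi\to0$ and $\sum_{k=0}^{t-1}M^kb\to G_Sb$. This recovers Theorem~\ref{thm:steady-state} and shows convergence from any $\phi$. It is not required for the statement itself.
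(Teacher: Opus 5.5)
Your induction is correct and is the same argument as the paper's. The paper simply states that the formula follows by iterating $v_\Omega^{t+1}=Mv_\Omega^t+b$ from $v_\Omega^0=\phi$, and your base case and index shift write that iteration out explicitly.
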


\begin{proof}
The result follows by iterating the affine recursion $v_\Omega^{t+1}=Mv_\Omega^t+b$ from $v_\Omega^0=\phi$.
\end{proof}

\begin{proposition}[Error recursion and convergence bound]
\label{prop:error-recursion}
Assume \eqref{eq:spectral_radius_condition}. Then
\begin{equation}
v_\Omega^t-v_\Omega^*=M^t(\phi-v_\Omega^*)
\label{eq:error-exact}
\end{equation}
for every $t\in\mathbb{N}_0$. Consequently, for any submultiplicative matrix norm and its compatible vector norm,
\begin{equation}
\|v_\Omega^t-v_\Omega^*\|
\leq
\|M^t\|\,\|\phi-v_\Omega^*\|.
\label{eq:error-norm}
\end{equation}
\end{proposition}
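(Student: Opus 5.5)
The plan is to subtract the fixed-point equation from the recursion and then induct on $t$. By Theorem~\ref{thm:steady-state}, the assumption \eqref{eq:spectral_radius_condition} guarantees that $v_\Omega^*$ exists and satisfies the fixed-point identity $v_\Omega^* = Mv_\Omega^* + b$. Theorem~\ref{thm:interior-affine}, rewritten with $M$ and $b$, gives the recursion $v_\Omega^{t+1} = Mv_\Omega^t + b$ for every $t\in\mathbb{N}_0$. Subtracting the two, the affine input $b$ cancels, and we obtain the homogeneous error recursion
\[
e^{t+1} = Me^t, \qquad e^t := v_\Omega^t - v_\Omega^*.
\]

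Next comes the induction. The base case is $e^0 = v_\Omega^0 - v_\Omega^* = \phi - v_\Omega^*$, since $v_\Omega^0=\phi$. If $e^t = M^t(\phi - v_\Omega^*)$, then $e^{t+1} = Me^t = M^{t+1}(\phi - v_\Omega^*)$. This proves \eqref{eq:error-exact} for all $t$.

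As a cross-check, the same identity also follows from Theorem~\ref{thm:transient-solution}. Write $b=(I_\Omega-M)v_\Omega^*$; the geometric sum then telescopes:
\[
\sum_{k=0}^{t-1}M^k(I_\Omega-M)v_\Omega^* = (I_\Omega - M^t)v_\Omega^*.
\]
Substituting this into \eqref{eq:transient-solution} gives $v_\Omega^t = M^t\phi + v_\Omega^* - M^t v_\Omega^*$, which is again \eqref{eq:error-exact}.

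For the bound \eqref{eq:error-norm}, take the norm of both sides of \eqref{eq:error-exact}. Compatibility of the vector norm with the matrix norm means $\|Ax\|\le\|A\|\,\|x\|$, and applying this with $A=M^t$ and $x=\phi-v_\Omega^*$ yields the inequality. Submultiplicativity is not needed for the inequality itself. It is needed only for the remark that $\|M^t\|\le\|M\|^t$, and, through Gelfand's formula $\|M^t\|^{1/t}\to\rho(M)<1$, for the conclusion that the bound tends to zero geometrically.

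No step is a genuine obstacle. The only point needing care is that the fixed-point identity is a legitimate input: it requires \eqref{eq:spectral_radius_condition}, which is exactly why that hypothesis appears in the statement.
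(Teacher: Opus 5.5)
Your proof is correct and matches the paper's argument: subtract the fixed-point identity $v_\Omega^*=Mv_\Omega^*+b$ from $v_\Omega^{t+1}=Mv_\Omega^t+b$, iterate the resulting linear error recursion from $v_\Omega^0=\phi$, and take norms. Your telescoping cross-check via Theorem~\ref{thm:transient-solution} is valid, and so is your remark that compatibility alone, without submultiplicativity, suffices for \eqref{eq:error-norm}.
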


\begin{proof}
Subtract the fixed-point identity $v_\Omega^*=Mv_\Omega^*+b$ from $v_\Omega^{t+1}=Mv_\Omega^t+b$, and iterate the resulting linear error recursion.
\end{proof}

\begin{theorem}[Quantitative convergence rate]
\label{thm:rate-spectral-radius}
Assume \eqref{eq:spectral_radius_condition}. For every $\delta$ satisfying $0<\delta<1-\rho(M)$, there exists $C_\delta\geq 1$ such that
\begin{equation}
\|v_\Omega^t-v_\Omega^*\|
\leq
C_\delta\bigl(\rho(M)+\delta\bigr)^t\,\|\phi-v_\Omega^*\|
\label{eq:rate-Cdelta}
\end{equation}
for all $t\in\mathbb{N}_0$. Moreover,
\begin{equation}
\limsup_{t\to\infty}\|v_\Omega^t-v_\Omega^*\|^{1/t}
\leq\rho(M).
\label{eq:gelfand}
\end{equation}
\end{theorem}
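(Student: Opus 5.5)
The plan is to combine the exact error identity of Proposition~\ref{prop:error-recursion} with a standard bound on $\|M^t\|$ obtained from the spectral radius. By \eqref{eq:error-norm}, for the chosen submultiplicative matrix norm and compatible vector norm,
\[
\|v_\Omega^t-v_\Omega^*\|\leq\|M^t\|\,\|\phi-v_\Omega^*\|.
\]
Both claims therefore reduce to estimates on $\|M^t\|$ alone.

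For \eqref{eq:rate-Cdelta}, fix $\delta$ with $0<\delta<1-\rho(M)$ and set $r:=\rho(M)+\delta<1$. First I will invoke Gelfand's formula, $\lim_{t\to\infty}\|M^t\|^{1/t}=\rho(M)$, which holds for every submultiplicative norm on a finite-dimensional matrix algebra. It gives an index $T$ such that $\|M^t\|\leq r^t$ for all $t\geq T$. For the finitely many indices $0\leq t<T$ the ratio $\|M^t\|/r^t$ is finite, so I will define
\[
C_\delta:=\max\Bigl\{1,\ \max_{0\leq t<T}\|M^t\|\,r^{-t}\Bigr\}.
\]
This constant is at least $1$ and satisfies $\|M^t\|\leq C_\delta r^t$ for all $t\in\mathbb{N}_0$. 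Substituting into \eqref{eq:error-norm} gives \eqref{eq:rate-Cdelta}. One subtlety is that $\|M^0\|=\|I\|$ need not equal $1$ for an arbitrary submultiplicative norm; taking the maximum over $t=0$ covers this case.

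An alternative that avoids Gelfand's formula is to construct an induced norm $\|\cdot\|_\delta$ with $\|M\|_\delta\leq\rho(M)+\delta$. This comes from a Jordan or Schur form rescaled by $\mathrm{diag}(1,\epsilon,\epsilon^2,\ldots)$. Norm equivalence then supplies $C_\delta$. I would mention this route only as a remark.

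For \eqref{eq:gelfand}, I will take $t$-th roots in \eqref{eq:rate-Cdelta}:
\[
\|v_\Omega^t-v_\Omega^*\|^{1/t}\leq C_\delta^{1/t}\,\bigl(\rho(M)+\delta\bigr)\,\|\phi-v_\Omega^*\|^{1/t}.
\]
If $\phi=v_\Omega^*$, the error is identically zero and the bound is trivial. Otherwise both $C_\delta^{1/t}$ and $\|\phi-v_\Omega^*\|^{1/t}$ tend to $1$. Hence the $\limsup$ is at most $\rho(M)+\delta$, and letting $\delta\downarrow0$ gives \eqref{eq:gelfand}. When $\rho(M)=0$ the admissible range of $\delta$ is still $(0,1)$, so no separate treatment is needed.

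The only step that is not bookkeeping is the uniform bound $\|M^t\|\leq C_\delta(\rho(M)+\delta)^t$, and it rests entirely on Gelfand's formula, which I will cite rather than reprove. I also need to confirm that the vector norm in \eqref{eq:error-norm} is compatible with the matrix norm, as assumed in Proposition~\ref{prop:error-recursion}; nothing else is required.
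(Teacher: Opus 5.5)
Your proposal is correct and follows essentially the same route as the paper: reduce to bounding $\|M^t\|$ via the error identity of Proposition~\ref{prop:error-recursion}, invoke Gelfand's formula, and absorb the finitely many initial indices into $C_\delta$. Your explicit choice $C_\delta=\max\{1,\max_{0\leq t<T}\|M^t\|r^{-t}\}$ and your $t$-th-root argument with $\delta\downarrow 0$ for \eqref{eq:gelfand} simply spell out what the paper calls the usual finite-prefix argument.
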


\begin{proof}
By Proposition~\ref{prop:error-recursion}, it suffices to bound $\|M^t\|$. Gelfand's formula gives $\lim_{t\to\infty}\|M^t\|^{1/t}=\rho(M)$ for every submultiplicative norm \citep{HornJohnson2013,Higham2008}. The usual finite-prefix argument then yields a constant $C_\delta$ for which $\|M^t\|\leq C_\delta(\rho(M)+\delta)^t$ for all $t$.
\end{proof}

In terms of opinion dynamics, the convergence rate is governed by the persistence of opinion circulation within the interior network. Strong attachment to initial opinions, boundary injection, or both reduce the effective persistence of this circulation and can accelerate stabilization. The result concerns convergence under the specified FJ model; it should not be interpreted as an empirical estimate of the speed of real-world social influence without appropriate data and model validation.

\begin{theorem}[Sensitivity to an interior susceptibility parameter]
\label{thm:sensitivity-s}
Let $k\in\Omega$, let $E_k=e_ke_k^\top$, and suppose that $s_k$ varies over an open interval on which \eqref{eq:spectral_radius_condition} continues to hold. Then $v_\Omega^*$ is differentiable with respect to $s_k$, and
\begin{equation}
\frac{\partial v_\Omega^*}{\partial s_k}
=G_S E_k\Bigl(W_{\Omega\Omega}v_\Omega^*+W_{\Omega\partial}\psi-\phi\Bigr).
\label{eq:sensitivity}
\end{equation}
Equivalently,
\begin{equation}
\frac{\partial v_\Omega^*}{\partial s_k}
=\Bigl(W_{\Omega\Omega}v_\Omega^*+W_{\Omega\partial}\psi-\phi\Bigr)_kG_Se_k.
\label{eq:sensitivity-column}
\end{equation}
\end{theorem}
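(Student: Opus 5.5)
The plan is to differentiate the fixed-point identity that defines $v_\Omega^*$, not the explicit inverse in \eqref{eq:steady_state}. By Theorem~\ref{thm:steady-state}, on the given open interval the steady state is the unique solution of
\[
(I_\Omega-S_\Omega W_{\Omega\Omega})\,v_\Omega^*=S_\Omega W_{\Omega\partial}\psi+(I_\Omega-S_\Omega)\phi .
\]
Here $W$, $\psi$ and $\phi$ do not depend on $s_k$. Moreover, $S_\Omega$ is affine in $s_k$, with $\partial S_\Omega/\partial s_k=E_k$.

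\textbf{Differentiability.} On the interval, \eqref{eq:spectral_radius_condition} holds, so $I_\Omega-S_\Omega W_{\Omega\Omega}$ is invertible. Its entries are polynomial (indeed affine) in $s_k$. By Cramer's rule, the entries of $G_S=(I_\Omega-S_\Omega W_{\Omega\Omega})^{-1}$ are rational functions of $s_k$ whose denominator does not vanish on the interval, so they are smooth there. The right-hand side is affine in $s_k$. Hence $v_\Omega^*=G_S b$ is differentiable. Alternatively, one could use the identity $\partial(A^{-1})=-A^{-1}(\partial A)A^{-1}$ for a differentiable invertible family.

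\textbf{Differentiating the identity.} Apply the product rule:
\[
-E_kW_{\Omega\Omega}v_\Omega^*+(I_\Omega-S_\Omega W_{\Omega\Omega})\frac{\partial v_\Omega^*}{\partial s_k}
=E_kW_{\Omega\partial}\psi-E_k\phi .
\]
Rearranging gives
\[
(I_\Omega-S_\Omega W_{\Omega\Omega})\,\partial v_\Omega^*/\partial s_k=E_k\bigl(W_{\Omega\Omega}v_\Omega^*+W_{\Omega\partial}\psi-\phi\bigr).
\]
Multiplying on the left by $G_S$ yields \eqref{eq:sensitivity}.

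\textbf{Column form.} For any vector $x$, $E_kx=e_k(e_k^\top x)=x_k e_k$. Therefore $G_SE_kx=x_k\,G_Se_k$, which is \eqref{eq:sensitivity-column}.

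The only point needing care is differentiability of $v_\Omega^*$, which is what justifies the implicit differentiation. This follows because the open-interval hypothesis keeps the matrix invertible throughout, so no obstacle is expected. Everything else is a direct computation.
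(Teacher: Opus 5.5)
Your proposal is correct and follows essentially the paper's route. The paper differentiates $v_\Omega^*=A(s)^{-1}b(s)$ directly using $\partial S_\Omega/\partial s_k=E_k$, while you differentiate the equivalent identity $A(s)v_\Omega^*=b(s)$ and then apply $G_S$, which is the same computation; your Cramer's-rule justification of differentiability is a detail the paper leaves implicit.
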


\begin{proof}
Write $A(s)=I_\Omega-S_\Omega W_{\Omega\Omega}$ and $b(s)=S_\Omega W_{\Omega\partial}\psi+(I_\Omega-S_\Omega)\phi$, so that $v_\Omega^*=A(s)^{-1}b(s)$. Since $\partial S_\Omega/\partial s_k=E_k$, differentiation of $A(s)^{-1}b(s)$ gives \eqref{eq:sensitivity} and \eqref{eq:sensitivity-column}.
\end{proof}

\begin{theorem}[Perturbation bound for the influence matrix]
\label{thm:perturbation-W}
Fix $S_\Omega$, $\phi$, and $\psi$. Let $W$ and $\widetilde W$ be nonnegative row-substochastic influence matrices with the same partition $V=\Omega\cup\partial\Omega$, and suppose that both associated interior matrices satisfy \eqref{eq:spectral_radius_condition}. Define
\[
A=I_\Omega-S_\Omega W_{\Omega\Omega},
\qquad
\widetilde A=I_\Omega-S_\Omega\widetilde W_{\Omega\Omega},
\]
\[
b=S_\Omega W_{\Omega\partial}\psi+(I_\Omega-S_\Omega)\phi,
\qquad
\widetilde b=S_\Omega\widetilde W_{\Omega\partial}\psi+(I_\Omega-S_\Omega)\phi.
\]
Then, for every compatible submultiplicative norm,
\begin{align}
\|v_\Omega^*(W)-v_\Omega^*(\widetilde W)\|
\leq{}&
\|A^{-1}\|\,\|S_\Omega\|\,
\|W_{\Omega\Omega}-\widetilde W_{\Omega\Omega}\|\,
\|\widetilde A^{-1}\|\,\|\widetilde b\| \notag\\
&+
\|\widetilde A^{-1}\|\,\|S_\Omega\|\,
\|W_{\Omega\partial}-\widetilde W_{\Omega\partial}\|\,\|\psi\|.
\label{eq:perturbation}
\end{align}
\end{theorem}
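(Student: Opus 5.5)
The plan is to prove \eqref{eq:perturbation} by moving from $W$ to $\widetilde W$ in two stages, so that each term on the right-hand side comes from one stage. By Theorem~\ref{thm:steady-state}, $v_\Omega^*(W)=A^{-1}b$ and $v_\Omega^*(\widetilde W)=\widetilde A^{-1}\widetilde b$. I would pass through the hybrid vector $\widetilde A^{-1}b$, which uses the perturbed interior operator together with the unperturbed boundary input. This gives the exact splitting
\[
v_\Omega^*(W)-v_\Omega^*(\widetilde W)
=\bigl(A^{-1}-\widetilde A^{-1}\bigr)b+\widetilde A^{-1}\bigl(b-\widetilde b\bigr).
\]
The second (boundary) stage is immediate. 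Since $b-\widetilde b=S_\Omega(W_{\Omega\partial}-\widetilde W_{\Omega\partial})\psi$, submultiplicativity bounds it by $\|\widetilde A^{-1}\|\,\|S_\Omega\|\,\|W_{\Omega\partial}-\widetilde W_{\Omega\partial}\|\,\|\psi\|$. This is exactly the second term of \eqref{eq:perturbation}, with $\|\widetilde A^{-1}\|$ as written.

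For the first (interior) stage I would use the second resolvent identity,
\[
A^{-1}-\widetilde A^{-1}=A^{-1}(\widetilde A-A)\widetilde A^{-1},
\qquad
\widetilde A-A=S_\Omega(W_{\Omega\Omega}-\widetilde W_{\Omega\Omega}).
\]
Both inverses exist under \eqref{eq:spectral_radius_condition}. This yields the factors $\|A^{-1}\|\,\|S_\Omega\|\,\|W_{\Omega\Omega}-\widetilde W_{\Omega\Omega}\|\,\|\widetilde A^{-1}\|$, which are exactly the prefactors of the first term of \eqref{eq:perturbation}.

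The main obstacle is that this stage leaves the factor $\|b\|$, while the statement has $\|\widetilde b\|$. Fixing that by choosing the other hybrid, $A^{-1}\widetilde b$, moves $A^{-1}$ into the boundary term. So the two requirements of the stated form pull against each other, and the reconciliation has to happen in the interior term. I would first write $b=\widetilde b+(b-\widetilde b)$. The interior term then becomes the stated first term plus the cross term
\[
A^{-1}S_\Omega(W_{\Omega\Omega}-\widetilde W_{\Omega\Omega})\widetilde A^{-1}S_\Omega(W_{\Omega\partial}-\widetilde W_{\Omega\partial})\psi .
\]
Combining this cross term with the boundary term $\widetilde A^{-1}(b-\widetilde b)$ gives $\bigl(A^{-1}-\widetilde A^{-1}+\widetilde A^{-1}\bigr)(b-\widetilde b)=A^{-1}(b-\widetilde b)$. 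So this route returns the $\|A^{-1}\|$ form of the boundary term, not the one stated.

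To obtain \eqref{eq:perturbation} exactly as stated, I would then try to exploit the structure available here: nonnegativity of $S_\Omega$, $W$ and $\widetilde W$; the Neumann series for $A^{-1}$ and $\widetilde A^{-1}$; and, in the $\infty$-norm, the fact that substochasticity makes $\|S_\Omega\|_\infty\le 1$. The aim is to show that the cross term is dominated by the slack between the two admissible forms, for example by comparing entrywise the Neumann expansions of $A^{-1}$ and $\widetilde A^{-1}$ applied to nonnegative parts. I expect this step to carry the whole difficulty.

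If that domination cannot be established for a general compatible submultiplicative norm, the splitting above still proves a rigorous version of \eqref{eq:perturbation}. It holds either with $\|b\|$ in place of $\|\widetilde b\|$ in the first term, or with an additional second-order term $\|A^{-1}\|\,\|\widetilde A^{-1}\|\,\|S_\Omega\|^2\,\|W_{\Omega\Omega}-\widetilde W_{\Omega\Omega}\|\,\|W_{\Omega\partial}-\widetilde W_{\Omega\partial}\|\,\|\psi\|$. Both versions agree with the statement to first order in the perturbation.
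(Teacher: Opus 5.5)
Your diagnosis is correct. The step you say carries the whole difficulty cannot be carried out, because \eqref{eq:perturbation} as stated is false. No entrywise Neumann comparison can rescue it, since the following counterexample satisfies every structural hypothesis you planned to exploit. Take $V=\{1,2\}$, $\Omega=\{1\}$, $\partial\Omega=\{2\}$, $s_1=\tfrac12$, $\phi=0$, $\psi=1$, $W_{11}=W_{12}=\tfrac12$ (the paper allows self-loops), all other entries of $W$ zero, and $\widetilde W=0$. Both matrices are nonnegative and row-substochastic, with $\rho(S_\Omega W_{\Omega\Omega})=\tfrac14$ and $\rho(S_\Omega\widetilde W_{\Omega\Omega})=0$. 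Then $A=\tfrac34$, $\widetilde A=1$, $b=\tfrac14$ and $\widetilde b=0$, so $v_\Omega^*(W)=\tfrac13$ and $v_\Omega^*(\widetilde W)=0$. Use the absolute value, which here is the $1$-, $2$- and $\infty$-norm. The left side is $\tfrac13$, whereas the stated right side is $0+1\cdot\tfrac12\cdot\tfrac12\cdot 1=\tfrac14$. The mechanism is the one you identified: when $\widetilde b=0$, the stated form would let you replace the amplification $A^{-1}$ acting on $b-\widetilde b$ by $\widetilde A^{-1}$.

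The paper's proof uses the resolvent identity and exactly the decomposition you reach in your third paragraph,
\[
A^{-1}b-\widetilde A^{-1}\widetilde b=(A^{-1}-\widetilde A^{-1})\widetilde b+A^{-1}(b-\widetilde b),
\]
and then asserts \eqref{eq:perturbation}. Submultiplicativity applied to this decomposition gives $\|A^{-1}\|$ in the second term, not $\|\widetilde A^{-1}\|$. So the paper's argument proves the corrected inequality, and the $\widetilde A^{-1}$ in the statement's boundary term is an error.

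You should drop the domination attempt, record the counterexample, and state one of the valid bounds:
\begin{enumerate}
\item the paper's version, with $\|A^{-1}\|$ in the boundary term;
\item your version with $\|b\|$ in place of $\|\widetilde b\|$;
\item your version with the extra second-order term.
\end{enumerate}
Your derivations of all three are correct. In the example above each holds with equality at $\tfrac13$, so each is sharp there while the stated mixed right-hand side is strictly smaller.
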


\begin{proof}
Use the resolvent identity
\[
A^{-1}-\widetilde A^{-1}
=A^{-1}(\widetilde A-A)\widetilde A^{-1}
\]
and the decomposition
\[
A^{-1}b-\widetilde A^{-1}\widetilde b
=(A^{-1}-\widetilde A^{-1})\widetilde b
+A^{-1}(b-\widetilde b).
\]
Submultiplicativity and $b-\widetilde b=S_\Omega(W_{\Omega\partial}-\widetilde W_{\Omega\partial})\psi$ give \eqref{eq:perturbation} \citep{HornJohnson2013,Higham2008}.
\end{proof}

\subsection{Spectral specialization for homogeneous susceptibility}
\label{subsec:spectral-specialization}

Assume the undirected reversible random-walk specialization $W=N=D^{-1}A$ from Section~\ref{subsec:undirected-specialization}, and let $S_\Omega=sI_\Omega$ with $s\in(0,1)$. Define
\[
B_\Omega
:=D_\Omega^{1/2}N_{\Omega\Omega}D_\Omega^{-1/2}
=D_\Omega^{-1/2}A_{\Omega\Omega}D_\Omega^{-1/2}
=I_\Omega-\Lsym_{\Omega\Omega}.
\]
The matrix $B_\Omega$ is symmetric. Let $(\lambda_k,u_k)$, $k=1,\ldots,|\Omega|$, denote its real eigenpairs, with $\{u_k\}$ orthonormal in the Euclidean inner product.

\begin{corollary}[Corrected spectral representation]
\label{cor:spectral}
The Green operator has the representation
\begin{equation}
G_s:=(I_\Omega-sN_{\Omega\Omega})^{-1}
=D_\Omega^{-1/2}
\left(
\sum_{k=1}^{|\Omega|}
\frac{1}{1-s\lambda_k}\,u_ku_k^\top
\right)
D_\Omega^{1/2}.
\label{eq:spectral-green}
\end{equation}
\end{corollary}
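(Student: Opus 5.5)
The argument is a similarity transformation followed by the spectral theorem. I will first conjugate the interior iteration matrix into the symmetric matrix $B_\Omega$, then diagonalize $B_\Omega$ orthogonally, and finally undo the conjugation.

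\textbf{Step 1: the similarity relation.} Because $N=D^{-1}A$ and $D$ is diagonal, the principal submatrix is $N_{\Omega\Omega}=D_\Omega^{-1}A_{\Omega\Omega}$. It follows that
\[
N_{\Omega\Omega}=D_\Omega^{-1/2}\bigl(D_\Omega^{-1/2}A_{\Omega\Omega}D_\Omega^{-1/2}\bigr)D_\Omega^{1/2}=D_\Omega^{-1/2}B_\Omega D_\Omega^{1/2}.
\]
Hence
\[
I_\Omega-sN_{\Omega\Omega}=D_\Omega^{-1/2}(I_\Omega-sB_\Omega)D_\Omega^{1/2}.
\]
Here $D_\Omega^{\pm1/2}$ is well defined because all degrees are positive. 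The identity $B_\Omega=I_\Omega-\Lsym_{\Omega\Omega}$ comes directly from restricting $\Lsym=I-D^{-1/2}AD^{-1/2}$ to $\Omega$.

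\textbf{Step 2: invertibility.} This is the only point needing care: I must show $1-s\lambda_k\neq 0$ for every $k$. Since $B_\Omega$ is similar to $N_{\Omega\Omega}$, its eigenvalues satisfy $|\lambda_k|\le\rho(N_{\Omega\Omega})$. Because $N_{\Omega\Omega}$ is a principal submatrix of the row-stochastic matrix $N$, it is substochastic, so $\rho(N_{\Omega\Omega})\le\|N_{\Omega\Omega}\|_\infty\le 1$. With $s\in(0,1)$ this gives $|s\lambda_k|<1$, hence $1-s\lambda_k>0$. As a consequence, $\rho(sN_{\Omega\Omega})<1$, so condition \eqref{eq:spectral_radius_condition} holds automatically in this specialization and $G_s$ agrees with the Green operator of Definition~\ref{def:green-operator}. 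No boundary-accessibility assumption is needed, because $s<1$ already forces the contraction.

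\textbf{Step 3: spectral decomposition and inversion.} By the spectral theorem, $B_\Omega=\sum_k\lambda_k u_ku_k^\top$ with $\sum_k u_ku_k^\top=I_\Omega$. Therefore
\[
I_\Omega-sB_\Omega=\sum_k(1-s\lambda_k)u_ku_k^\top,
\]
and its inverse is $\sum_k(1-s\lambda_k)^{-1}u_ku_k^\top$. Inverting the similarity from Step 1 gives
\[
(D_\Omega^{-1/2}XD_\Omega^{1/2})^{-1}=D_\Omega^{-1/2}X^{-1}D_\Omega^{1/2},
\]
which is exactly \eqref{eq:spectral-green}. As a cross-check, one can instead sum the Neumann series termwise, using $N_{\Omega\Omega}^m=D_\Omega^{-1/2}B_\Omega^mD_\Omega^{1/2}$ and $\sum_m s^m\lambda_k^m=(1-s\lambda_k)^{-1}$; this yields the same formula.
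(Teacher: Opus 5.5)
Your proposal is correct and follows the paper's proof. Both arguments use the similarity $I_\Omega-sN_{\Omega\Omega}=D_\Omega^{-1/2}(I_\Omega-sB_\Omega)D_\Omega^{1/2}$, expand $(I_\Omega-sB_\Omega)^{-1}$ in the orthonormal eigenbasis of the symmetric $B_\Omega$, and then undo the conjugation. Your explicit check that $1-s\lambda_k>0$, using $\rho(N_{\Omega\Omega})\le\|N_{\Omega\Omega}\|_\infty\le 1$ and $s<1$, fills in an invertibility step that the paper leaves implicit.
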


\begin{proof}
The similarity relation
\[
I_\Omega-sN_{\Omega\Omega}
=D_\Omega^{-1/2}(I_\Omega-sB_\Omega)D_\Omega^{1/2}
\]
implies the corresponding inverse relation. Since $B_\Omega$ is symmetric, $(I_\Omega-sB_\Omega)^{-1}$ admits the stated orthonormal eigen-expansion, which yields \eqref{eq:spectral-green}.
\end{proof}

\begin{corollary}[Degree-weighted convergence bound]
\label{cor:rate-dirichlet-spectrum}
Let
\[
\|x\|_{D_\Omega}:=\bigl(x^\top D_\Omega x\bigr)^{1/2}
\]
and let $\lambda_\star=\max_{1\leq k\leq|\Omega|}|\lambda_k|=\rho(N_{\Omega\Omega})$, where $\lambda_k$ denotes the $k$th eigenvalue of the symmetric normalized adjacency matrix $B_\Omega$ introduced above. Then
\begin{equation}
\|v_\Omega^t-v_\Omega^*\|_{D_\Omega}
\leq
(s\lambda_\star)^t\,
\|\phi-v_\Omega^*\|_{D_\Omega}
\label{eq:rate-dirichlet}
\end{equation}
for every $t\in\mathbb{N}_0$, and $s\lambda_\star<1$.
\end{corollary}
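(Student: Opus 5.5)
The plan is to combine the exact error identity of Proposition~\ref{prop:error-recursion} with the similarity transformation used in Corollary~\ref{cor:spectral}. By \eqref{eq:error-exact} with $M=sN_{\Omega\Omega}$, we have $e^t:=v_\Omega^t-v_\Omega^*=s^tN_{\Omega\Omega}^t e^0$, where $e^0=\phi-v_\Omega^*$. It therefore suffices to show that the operator norm of $N_{\Omega\Omega}$ induced by $\|\cdot\|_{D_\Omega}$ equals $\lambda_\star$.

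The key observation is that $\|x\|_{D_\Omega}=\|D_\Omega^{1/2}x\|_2$. Setting $y=D_\Omega^{1/2}x$ and using $N_{\Omega\Omega}=D_\Omega^{-1/2}B_\Omega D_\Omega^{1/2}$, we get
\[
\|N_{\Omega\Omega}x\|_{D_\Omega}=\|D_\Omega^{1/2}D_\Omega^{-1/2}B_\Omega y\|_2=\|B_\Omega y\|_2\le \|B_\Omega\|_2\|y\|_2 .
\]
Since $B_\Omega$ is symmetric, $\|B_\Omega\|_2=\max_k|\lambda_k|=\lambda_\star$. The same computation applied to $N_{\Omega\Omega}^t=D_\Omega^{-1/2}B_\Omega^tD_\Omega^{1/2}$ gives $\|N_{\Omega\Omega}^t e^0\|_{D_\Omega}\le\lambda_\star^t\|e^0\|_{D_\Omega}$ directly, and multiplying by $s^t$ yields \eqref{eq:rate-dirichlet}. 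The identification $\lambda_\star=\rho(N_{\Omega\Omega})$ follows from the similarity, because similar matrices share their spectrum.

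The remaining claim is $s\lambda_\star<1$. Here I need $\lambda_\star\le 1$. This holds because $N_{\Omega\Omega}$ is a principal submatrix of the row-stochastic matrix $N$, so it is substochastic and $\rho(N_{\Omega\Omega})\le\|N_{\Omega\Omega}\|_\infty\le1$. Since $s\in(0,1)$, we get $s\lambda_\star\le s<1$. No boundary accessibility is needed for this corollary, because the strict inequality comes from $s<1$. The same argument shows that $\rho(M)<1$, so \eqref{eq:spectral_radius_condition} holds and $v_\Omega^*$ is well defined; this has to be checked before invoking Proposition~\ref{prop:error-recursion}.

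The main obstacle is mild. The step that needs care is recognising that the $D_\Omega$-weighted norm is precisely the norm that makes $N_{\Omega\Omega}$ unitarily similar to the symmetric $B_\Omega$. Once that is in place, the bound holds with constant one, in contrast to the $C_\delta$ in Theorem~\ref{thm:rate-spectral-radius}. I would also note the degenerate case $t=0$, where the bound is an equality.
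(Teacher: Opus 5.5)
Your proposal is correct and follows the paper's proof: you combine the exact error identity \eqref{eq:error-exact} with the similarity $D_\Omega^{1/2}N_{\Omega\Omega}^t=B_\Omega^tD_\Omega^{1/2}$ and use $\|B_\Omega^t\|_2=\lambda_\star^t$ for the symmetric matrix $B_\Omega$. Your substochasticity argument giving $s\lambda_\star\le s<1$, and hence $\rho(M)<1$, makes explicit a step that the paper's proof leaves implicit.
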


\begin{proof}
The similarity relation gives
\[
D_\Omega^{1/2}N_{\Omega\Omega}^t
=B_\Omega^tD_\Omega^{1/2}.
\]
Because $B_\Omega$ is symmetric, $\|B_\Omega^t\|_2=\lambda_\star^t$. Applying this identity to the exact error recursion \eqref{eq:error-exact} gives \eqref{eq:rate-dirichlet}.
\end{proof}

Interpreted as a network process, the spectral specialization supplies an interpretable description of how the undirected network modes are attenuated by a common susceptibility parameter. It is a special case used for intuition and computation; the resolvent formulation in Section~\ref{subsec:resolvent} remains the primary framework for directed and heterogeneous networks.

From a computational standpoint, the same resolvent structure that underpins the theoretical analysis also provides the foundation for efficient computation. In particular, the source-indexed response quantities can be evaluated by solving sparse linear systems involving $I_\Omega-S_\Omega W_{\Omega\Omega}$ rather than by forming a dense inverse explicitly. Section~\ref{sec:computational-formulation} specifies the all-source scanning algorithm, numerical complexity, and reproducibility requirements.

\section{Computational Formulation and Algorithmic Implementation}
\label{sec:computational-formulation}

\subsection{Computing source-indexed steady-state responses}
\label{subsec:source-response-scope}

This section fixes the source-indexed counterfactual experiments used to compute the response matrix. Target-side influenceability descriptors and source-side broadcasting scores are interpreted separately in Section~\ref{sec:influenceability}; here the focus is the reproducible calculation that supplies their inputs.

For every candidate source $a\in V$, we define a source-indexed FJ experiment. Fix a source value $q=1$ and a baseline susceptibility vector $s\in[0,1]^V$. Let
\begin{equation}
\label{eq:source-indexed-susceptibility}
s^{[a]}_k=
\begin{cases}
0, & k=a,\\
s_k, & k\neq a,
\end{cases}
\qquad k\in V,
\end{equation}
and let the source-indexed initial condition be
\begin{equation}
\label{eq:source-indexed-initial-condition}
v_k^{[a]}(0)=
\begin{cases}
1, & k=a,\\
0, & k\neq a.
\end{cases}
\end{equation}
Thus, source $a$ is converted into a fully stubborn unit-opinion source. Any other baseline-stubborn node remains stubborn but is assigned the neutral value zero. For each source-induced boundary/interior partition, the associated steady response is computed by solving the corresponding stable linear system; no dense inverse is required.

The source-indexed construction is a repeated \emph{model-based intervention experiment}: the dynamical system is modified for each choice of $a$. Consequently, every row of the all-source response matrix calculated below is a mathematical counterfactual under the specified FJ model, fixed network, and chosen susceptibility profile.

Throughout this section, the influence-arrow convention from Section~\ref{subsec:direction-convention} is retained. An arrow $a\to b$ means that the opinion of $a$ enters the update of $b$, equivalently $W_{ba}>0$. Rows of the all-source response matrix therefore index sources, columns index targets, and an entry in row $a$, column $b$ measures the effect of source $a$ on target $b$.

\subsection{All-vertex scanning and response matrices}
\label{subsec:all-vertex-scanning}

The fixed-boundary construction can be extended to every vertex by repeating the source-indexed experiment in Section~\ref{subsec:source-response-scope}. For a source $a$, let
\[
B_a:=\{k\in V:s_k^{[a]}=0\},
\qquad
\Omega_a:=V\setminus B_a,
\]
and let $\psi^{[a]}\in\mathbb{R}^{|B_a|}$ equal one at source $a$ and zero at every other boundary vertex. Define
\begin{equation}
\label{eq:all-source-response}
\mathcal U^\infty_{ab}:=
\begin{cases}
1, & b=a,\\
\left[(I-S_{\Omega_a}^{[a]}W_{\Omega_a\Omega_a})^{-1}
S_{\Omega_a}^{[a]}W_{\Omega_a B_a}\psi^{[a]}\right]_b, & b\in\Omega_a,\\
0, & b\in B_a\setminus\{a\},
\end{cases}
\end{equation}
whenever the source-indexed system is stable. Let
\[
R:=\mathcal U^\infty\in[0,1]^{n\times n}
\]
denote the resulting all-source steady-response matrix. Its $a$th row is the target-response profile generated by the intervention at source $a$.

\begin{proposition}[Response--resolvent factorization and homogeneous specialization]
\label{prop:response-resolvent-factorization}
Assume that every baseline susceptibility satisfies $s_i\in(0,1)$, so that the only boundary node in the source-indexed experiment for $a$ is $a$ itself, and let
\[
S:=\operatorname{diag}(s),
\qquad
G:=(I-SW)^{-1}.
\]
Then the all-source steady-response matrix satisfies
\begin{equation}
\label{eq:response-resolvent-factorization}
R_{ab}=\frac{G_{ba}}{G_{aa}},
\qquad
R=D_G^{-1}G^\top,
\qquad
D_G:=\operatorname{diag}(G_{11},\ldots,G_{nn}).
\end{equation}
Consequently, $R$ is a source-normalized, transposed FJ resolvent. In particular, unless $G_{11}=\cdots=G_{nn}$, it cannot be a global scalar multiple of $G^\top$.

If $S=\beta I$ and the diagonal of $G=(I-\beta W)^{-1}$ is constant, say $G_{aa}=c$ for every $a$, then
\begin{equation}
\label{eq:response-homogeneous-resolvent}
R=c^{-1}(I-\beta W^\top)^{-1}.
\end{equation}
Thus, if $\kappa:=(I-\beta W^\top)^{-1}\mathbf{1}-\mathbf{1}$ denotes the source-oriented Katz vector based on the receive-from matrix, the broadcasting vector $B$ obeys
\begin{equation}
\label{eq:broadcasting-katz-specialization}
B=\frac{\kappa+(1-c)\mathbf{1}}{c(n-1)}.
\end{equation}
\end{proposition}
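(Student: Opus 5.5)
We would prove the factorization by showing that the source-indexed steady state for source $a$, extended by the boundary value $1$ at $a$, is proportional to the $a$th column of $G$. The homogeneous and Katz statements would then follow by substitution.

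\emph{Well-posedness.} Since every $s_i<1$ and $W$ is row-substochastic, $\|SW\|_\infty\le\max_i s_i<1$. Hence $\rho(SW)<1$, so $G=(I-SW)^{-1}=\sum_{k\ge0}(SW)^k$ exists. This series gives $G\ge I$ entrywise, and in particular $G_{aa}\ge1>0$. For source $a$, Section~\ref{subsec:all-vertex-scanning} gives $B_a=\{a\}$ and $\Omega_a=V\setminus\{a\}$. The interior matrix $S_{\Omega_a}W_{\Omega_a\Omega_a}$ is a principal submatrix of the nonnegative matrix $SW$. By Perron--Frobenius monotonicity its spectral radius is at most $\rho(SW)<1$, so the source-indexed system is stable. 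By Theorem~\ref{thm:steady-state} its interior response $u$ is then uniquely determined.

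\emph{Key identity.} Let $x\in\mathbb{R}^n$ have $x_a=1$ and $x|_{\Omega_a}=u$. The defining equation in \eqref{eq:all-source-response} reads
\[
u=S_{\Omega_a}W_{\Omega_a\Omega_a}u+S_{\Omega_a}W_{\Omega_a a}\cdot1 .
\]
This says exactly that $\bigl((I-SW)x\bigr)_b=0$ for every $b\ne a$. Therefore $(I-SW)x=\gamma e_a$ for some scalar $\gamma$, and so $x=\gamma Ge_a$, i.e.\ $x_b=\gamma G_{ba}$. The normalization $x_a=1$ forces $\gamma=1/G_{aa}$. Because $R_{ab}=x_b$ for all $b$ (with $R_{aa}=1$), this gives $R_{ab}=G_{ba}/G_{aa}$, which is $R=D_G^{-1}G^\top$ in matrix form. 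For the non-scalar claim, note that $R_{aa}=1$ for all $a$. If $R=\lambda G^\top$, then $\lambda G_{aa}=1$ for every $a$, which forces a constant diagonal.

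\emph{Homogeneous specialization.} When $S=\beta I$ and $G_{aa}=c$ for all $a$, we have $G^\top=(I-\beta W^\top)^{-1}$, which gives \eqref{eq:response-homogeneous-resolvent}. For \eqref{eq:broadcasting-katz-specialization} we assume, anticipating Section~\ref{sec:influenceability}, that the broadcasting score is the mean off-diagonal row sum, $B_a=\frac1{n-1}\sum_{b\ne a}R_{ab}$; this is the step most dependent on a definition not yet stated. Then
\[
R\mathbf 1=c^{-1}(\kappa+\mathbf 1),
\]
and subtracting the diagonal entry $R_{aa}=1$ gives $\sum_{b\ne a}R_{ab}=(\kappa_a+1-c)/c$. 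Dividing by $n-1$ yields the stated formula.

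The main obstacle is the key identity: we must check carefully that the interior equations coincide with the off-$a$ rows of $(I-SW)x=0$. This holds because the boundary block $W_{\Omega_a a}$ is exactly the $a$th column of $W$ restricted to $\Omega_a$. It also relies on the uniqueness established in the well-posedness step.
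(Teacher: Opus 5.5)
Your proposal is correct and follows essentially the same route as the paper: both identify the source-$a$ steady state, extended by the unit boundary value, with $Ge_a/G_{aa}$ through the off-$a$ rows of $(I-SW)x$. Both then read off the non-scalar claim from $R_{aa}=1$ and obtain the Katz formula by summing off-diagonal row entries. The only differences are that you deduce $x=\gamma Ge_a$ from invertibility of $I-SW$ where the paper checks $Ge_a/G_{aa}$ against the interior equations and invokes uniqueness, and that you make explicit the well-posedness facts ($G_{aa}\ge 1$ and stability of each source-indexed system) that the paper leaves implicit.
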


\begin{proof}
Fix a source $a$. The vector $u^{[a]}$ defined by $u^{[a]}_a=1$ and $u^{[a]}_b=R_{ab}$ for $b\neq a$ solves the source-indexed steady-state equations. Since $G=(I-SW)^{-1}$, the vector $G e_a/G_{aa}$ equals one at $a$ and satisfies the same interior equations at every $b\neq a$, because
\[
(I-SW)\frac{G e_a}{G_{aa}}=\frac{e_a}{G_{aa}}.
\]
Uniqueness of the stable source-indexed system gives $u^{[a]}=G e_a/G_{aa}$, proving \eqref{eq:response-resolvent-factorization}. If $R=\alpha G^\top$ for a scalar $\alpha$, then $1=R_{aa}=\alpha G_{aa}$ for every $a$, so equality of all diagonal entries of $G$ is necessary. Under the stated homogeneous condition, $D_G=cI$, which gives \eqref{eq:response-homogeneous-resolvent}. Summing each row of $R$, excluding its unit diagonal entry, then yields \eqref{eq:broadcasting-katz-specialization}.
\end{proof}

\begin{corollary}[Criterion for scalar-resolvent reduction]
\label{cor:scalar-resolvent-reduction}
Under the assumptions of Proposition~\ref{prop:response-resolvent-factorization}, the following statements are equivalent:
\begin{enumerate}
\item There exists a scalar $\alpha>0$ such that $R=\alpha G^\top$.
\item The diagonal of $G$ is constant: $G_{11}=\cdots=G_{nn}$.
\end{enumerate}
When these conditions hold, $\alpha=G_{aa}^{-1}$ for every $a$. Hence, in the general case, the all-source response matrix is a source-normalized transposed FJ resolvent rather than a scalar multiple of a resolvent kernel. A Katz-type reduction for broadcasting requires both the homogeneous specialization $S=\beta I$ and the constant-diagonal condition, in which case equation~\eqref{eq:broadcasting-katz-specialization} applies.
\end{corollary}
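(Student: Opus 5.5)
The plan is to reduce everything to the factorization $R=D_G^{-1}G^\top$ of Proposition~\ref{prop:response-resolvent-factorization}, which holds under the same hypotheses. The equivalence then becomes a statement about when the diagonal matrix $D_G^{-1}$ is a scalar multiple of the identity. Before comparing entries, I need $G_{aa}>0$ for every $a$, so that $D_G$ is invertible and $\alpha=G_{aa}^{-1}$ is positive. Since $s_i\in(0,1)$ and $W$ is row-substochastic, every row sum of $SW$ is at most $\max_i s_i<1$. Hence $\rho(SW)<1$, the Neumann series gives $G=\sum_{k\ge0}(SW)^k$, and therefore $G\ge I$ entrywise. In particular $G_{aa}\ge 1>0$.

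For (2)$\Rightarrow$(1), suppose $G_{aa}=c$ for every $a$. Then $D_G=cI$, so $R=c^{-1}G^\top$, and $\alpha=c^{-1}>0$ works.

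For (1)$\Rightarrow$(2), suppose $R=\alpha G^\top$. I will compare diagonal entries. By the definition~\eqref{eq:all-source-response}, $R_{aa}=1$; this is also visible in the factorization, since $R_{aa}=G_{aa}/G_{aa}=1$. So $1=\alpha G_{aa}$ for every $a$, which gives $G_{aa}=\alpha^{-1}$ for all $a$. This proves (2) and also pins down $\alpha=G_{aa}^{-1}$ for every $a$, which is the additional claim in the statement.

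The interpretive remarks follow directly. If the diagonal of $G$ is not constant, then (1) fails, so $R$ is a source-normalized transposed resolvent and not a scalar multiple of $G^\top$. For the Katz-type reduction, take $S=\beta I$ together with the constant-diagonal condition. Then $R=c^{-1}(I-\beta W)^{-\top}=c^{-1}(I-\beta W^\top)^{-1}$. Summing each row and removing the unit diagonal entry gives $(n-1)B=c^{-1}(\kappa+\mathbf 1)-\mathbf 1$, which is \eqref{eq:broadcasting-katz-specialization}. This assumes $B$ is the off-diagonal row sum normalized by $n-1$, as in the proof of the proposition.

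The only step with any substance is the positivity of $G_{aa}$, and the Neumann-series bound $G\ge I$ settles it. The rest is entrywise comparison.
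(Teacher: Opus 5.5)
Your proposal is correct and follows the paper's argument. You get (1)$\Rightarrow$(2) by comparing diagonal entries via $R_{aa}=1$, and (2)$\Rightarrow$(1) by reading $R=c^{-1}G^\top$ off the factorization $R=D_G^{-1}G^\top$. Your Neumann-series observation $G\geq I$, which gives $G_{aa}\geq 1$ and hence $\alpha>0$, makes explicit a positivity point that the paper leaves implicit.
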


\begin{proof}
If $R=\alpha G^\top$, then $1=R_{aa}=\alpha G_{aa}$ for every $a$, so all diagonal entries of $G$ are equal. Conversely, if $G_{aa}=c$ for every $a$, then equation~\eqref{eq:response-resolvent-factorization} gives $R=c^{-1}G^\top$.
\end{proof}

The corollary supplies the formal distinction from a scalar-resolvent centrality. Katz-type scores use one common geometric path kernel and a topology-derived propagation matrix, whereas $R$ normalizes each source intervention by its own diagonal Green entry $G_{aa}$ and, in the full construction of equation~\eqref{eq:all-source-response}, can incorporate heterogeneous susceptibilities and multiple baseline-stubborn boundary agents. Moreover, Theorem~\ref{thm:sensitivity-s} shows that response values can change when the susceptibility field changes while $W$ is held fixed; no centrality defined solely from $W$ can reproduce that susceptibility-dependent response family. Communicability instead uses the distinct exponential kernel $\exp(A)$ \citep{EstradaHatano2008}; it is related through path aggregation but is not a finite-parameter specialization of the source-normalized FJ response matrix. The homogeneous constant-diagonal case above states the exceptional setting in which source broadcasting is an affine transformation of a source-oriented Katz score.

\subsection{Computational implementation} 
\label{subsec:response-computation}

The all-vertex scan is a repeated linear-system calculation, not a dense matrix-inversion procedure. For each source $a$, the implementation should construct the source-indexed boundary and interior sets, solve the sparse linear system in \eqref{eq:all-source-response}, and store the resulting row of $R$.

\begin{center}
\fbox{\begin{minipage}{0.90\linewidth}
\textbf{All-source response algorithm.}
\begin{enumerate}
\item Input $W$, the baseline susceptibility vector $s$, source value $1$, numerical tolerance, and any response threshold.
\item For each $a\in V$, set $s_a^{[a]}=0$, set the source boundary value to one, and set all other initial and boundary values to zero.
\item Form the sparse interior operator $A_a=I-S_{\Omega_a}^{[a]}W_{\Omega_a\Omega_a}$ and verify the applicable stability condition.
\item Solve $A_ax_a=S_{\Omega_a}^{[a]}W_{\Omega_aB_a}\psi^{[a]}$ without explicitly forming $A_a^{-1}$.
\item Insert $x_a$ and the prescribed boundary values into row $a$ of $R$; then compute any required timing matrices or source/target summaries.
\item Output $R$, all reported score vectors, solver residuals, parameter values, and random seeds used in repeated simulations.
\end{enumerate}
\end{minipage}}
\end{center}

For sparse networks, direct sparse factorizations or iterative solvers are generally preferable to dense inversion. A complete scan may require one source-indexed solve per candidate source and, in a Monte Carlo design, per source--draw pair. Because converting the source into a boundary node changes the interior system, factorization reuse must be described rather than assumed. The computational report should state matrix sizes and sparsity, solver and preconditioner, stopping tolerance, maximum residual, treatment of nearly singular systems, response threshold, and the precise source-scan procedure. 

\section{Influenceability Measures and Dynamics-Induced Centralities}
\label{sec:influenceability}

The source-indexed calculation of $R$ is specified in Section~\ref{sec:computational-formulation}. This section interprets the resulting target-side descriptors and source- or target-oriented summary scores as distinct dynamics-induced network quantities.

\subsection{Target-side response descriptors}
\label{subsec:target-response-descriptors}

Let $G_W$ be the influence graph from Section~\ref{subsec:direction-convention}. For $a,b\in V$, define the directed influence distance
\begin{equation}
\label{eq:influence-distance}
d_{\mathrm{inf}}(a,b)
:=\min\bigl\{\ell\in\mathbb{N}_0:\text{there is a directed path }a\to\cdots\to b\text{ of length }\ell\text{ in }G_W\bigr\},
\end{equation}
with $d_{\mathrm{inf}}(a,b)=+\infty$ when no such path exists. In particular, $d_{\mathrm{inf}}(a,a)=0$.

\begin{definition}[Arrival, first response, stabilization, and steady response]
\label{def:target-response-descriptors}
For a source $a$ and target $b$, define the \emph{arrival time}
\begin{equation}
\label{eq:arrival-time}
T_b^0(a):=\inf\bigl\{t\in\mathbb{N}_0:v_b^{[a]}(t)>0\bigr\},
\end{equation}
with $\inf\emptyset=+\infty$. When $T_b^0(a)<+\infty$, define the \emph{first response}
\begin{equation}
\label{eq:first-response}
E_{ab}:=v_b^{[a]}\!\bigl(T_b^0(a)\bigr);
\end{equation}
otherwise set $E_{ab}=0$. For $\varepsilon>0$, define the \emph{stabilization time}
\begin{equation}
\label{eq:stabilization-time}
T_b^{\mathrm{stab}}(a;\varepsilon)
:=\inf\left\{t\in\mathbb{N}_0:
\sup_{\tau\geq t}\left|v_b^{[a]}(\tau)-\bar v_b^{[a]}\right|\leq\varepsilon\right\}.
\end{equation}
Finally, define the \emph{steady response}
\begin{equation}
\label{eq:steady-response}
\mathcal U^\infty_{ab}:=\bar v_b^{[a]}.
\end{equation}
\end{definition}

The stabilization time in \eqref{eq:stabilization-time} is a genuine eventual-stability criterion. It differs from a first-entrance time, which may be small even if the trajectory later leaves the prescribed neighborhood of its limit.

\begin{proposition}[Arrival time and influence distance]
\label{prop:arrival-distance}
Fix $a,b\in V$. Suppose that the source-indexed experiment has a single nonzero source as in \eqref{eq:source-indexed-initial-condition}, all propagation weights along at least one shortest directed path from $a$ to $b$ are strictly positive, and every non-source vertex on that path has strictly positive susceptibility. Then
\begin{equation}
\label{eq:arrival-distance-equality}
T_b^0(a)=d_{\mathrm{inf}}(a,b).
\end{equation}
If $b$ is not reachable from $a$, then $T_b^0(a)=+\infty$ and $\mathcal U^\infty_{ab}=0$.
\end{proposition}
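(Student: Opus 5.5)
We argue by induction on $t$ in the source-indexed FJ recursion $v^{[a]}(t+1)=S^{[a]}Wv^{[a]}(t)+(I-S^{[a]})v^{[a]}(0)$. All entries of $W$, $S^{[a]}$ and $v^{[a]}(0)$ are nonnegative, so every iterate is nonnegative. It is therefore enough to show two things: $v^{[a]}_b(t)=0$ for all $t<d_{\mathrm{inf}}(a,b)$, and $v^{[a]}_b(t)>0$ at $t=d_{\mathrm{inf}}(a,b)$.

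\emph{Lower bound (support growth).} We prove by induction that the support of $v^{[a]}(t)$ lies in the ball $\{k: d_{\mathrm{inf}}(a,k)\le t\}$. At $t=0$ the support is $\{a\}$. For $k\neq a$, the initial term $(1-s^{[a]}_k)v^{[a]}_k(0)$ vanishes, since $v_k^{[a]}(0)=0$ for $k\neq a$. Any other baseline-stubborn node $k$ has $s^{[a]}_k=0$ and stays at zero. Otherwise $v^{[a]}_k(t+1)=s_k\sum_j W_{kj}v_j^{[a]}(t)$. This is positive only if some $j$ with $W_{kj}>0$, i.e.\ an arc $j\to k$ in $G_W$, lies in the support at time $t$, which forces $d_{\mathrm{inf}}(a,k)\le t+1$. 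The source itself stays at $1$. Hence $v^{[a]}_b(t)=0$ for $t<d_{\mathrm{inf}}(a,b)$, so $T_b^0(a)\ge d_{\mathrm{inf}}(a,b)$.

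\emph{Upper bound (positivity along the path).} Let $a=k_0\to k_1\to\cdots\to k_\ell=b$ be the assumed shortest path with $W_{k_{m}k_{m-1}}>0$ and $s^{[a]}_{k_m}=s_{k_m}>0$ for $m\ge1$. We show by induction on $m$ that $v^{[a]}_{k_m}(m)>0$. The case $m=0$ holds since $v_a\equiv1$. For the step, nonnegativity gives
\[
v^{[a]}_{k_{m}}(m)\ge s_{k_m}W_{k_mk_{m-1}}v^{[a]}_{k_{m-1}}(m-1)>0.
\]
Combined with the lower bound, $T^0_b(a)=\ell$. The case $a=b$ is trivial, since $v_a(0)=1>0$ and $d_{\mathrm{inf}}(a,a)=0$.

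\emph{Unreachable case.} If $b$ is not reachable from $a$, the support argument gives $v^{[a]}_b(t)=0$ for all $t$, so $T_b^0(a)=+\infty$. For the steady response we would pass to the limit $\bar v^{[a]}_b=\lim_t v^{[a]}_b(t)=0$, using the convergence guaranteed by Theorem~\ref{thm:steady-state} and Proposition~\ref{prop:error-recursion} under stability. Alternatively, we can read it directly from \eqref{eq:all-source-response}. The interior nodes reachable from $a$ form a set $\Omega'$ that receives no weight from $\Omega_a\setminus\Omega'$. The solution restricted to $\Omega_a\setminus\Omega'$ therefore solves a homogeneous system with zero boundary input, and by invertibility it vanishes.

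The main obstacle is bookkeeping rather than depth: we must handle the other baseline-stubborn nodes, which are pinned at zero, and state the case $b\in B_a\setminus\{a\}$ consistently with \eqref{eq:all-source-response}. We also need to check that the limit defining $\mathcal U^\infty_{ab}$ exists, which requires the stability assumption implicit in the definition of $R$.
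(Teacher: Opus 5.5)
Your proof is correct and follows the same route as the paper's sketch. The paper argues that influence crosses one arrow per update step (lower bound), that there is a strictly positive contribution along the chosen shortest path (upper bound), and that no update term carries the source value when $b$ is unreachable; you make each of these rigorous through explicit inductions on the support and on the path vertices.

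There is one small slip in your optional alternative argument for $\mathcal U^\infty_{ab}=0$. What is needed is that $\Omega_a\setminus\Omega'$ receives no weight from $\Omega'\cup\{a\}$; the reverse statement you wrote fails in general. Also, invertibility of the restricted block should be justified: a principal submatrix of the nonnegative matrix $S^{[a]}_{\Omega_a}W_{\Omega_a\Omega_a}$ has spectral radius no larger than that of the full matrix. Your main limit argument is unaffected by either point.
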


\begin{proof}
Under the receive-from convention, influence can traverse at most one influence arrow per time step. Thus no positive response can occur before the length of a shortest directed path. Conversely, nonnegative propagation, positive weights, and positive intermediate susceptibilities ensure a strictly positive contribution along the specified shortest path at its terminal time. If there is no directed path, no term in the iterated update can carry the source value to $b$.
\end{proof}

Interpreted together, the four descriptors serve different purposes. $T_b^0(a)$ records first possible arrival, $E_{ab}$ records the initial positive response, $T_b^{\mathrm{stab}}(a;\varepsilon)$ records eventual stabilization, and $\mathcal U^\infty_{ab}$ records accumulated steady-state response. We use \emph{influenceability} for these target-side response properties. Source-side broadcasting scores are defined separately in Section~\ref{subsec:response-centralities}.

\subsection{Boundary-to-interior response operator}
\label{subsec:boundary-response-operator}

We first consider a fixed boundary set $B=\partial\Omega$ and interior set $\Omega$, with the partition and spectral-radius condition from Section~\ref{subsec:resolvent}. The boundary-to-interior response operator isolates the contribution of boundary values from that of interior initial opinions.

\begin{definition}[Boundary response matrix]
\label{def:boundary-response-matrix}
Let $G_S=(I_\Omega-S_\Omega W_{\Omega\Omega})^{-1}$. The \emph{boundary response matrix} is
\begin{equation}
\label{eq:boundary-response-matrix}
U_B:=G_S S_\Omega W_{\Omega B}
\in\mathbb{R}^{|\Omega|\times|B|}.
\end{equation}
For $j\in\Omega$ and $k\in B$, the entry $(U_B)_{jk}$ is the steady-state response at target $j$ to a unit fixed value at boundary source $k$, with all other boundary values and all interior initial opinions set to zero.
\end{definition}

\begin{proposition}[Boundary-wise steady-state decomposition]
\label{prop:boundary-wise-decomposition}
Under the spectral-radius condition,
\begin{equation}
\label{eq:boundary-wise-decomposition}
v_\Omega^*=U_B\psi+G_S(I_\Omega-S_\Omega)\phi.
\end{equation}
\end{proposition}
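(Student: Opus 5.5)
The plan is to start from the closed-form steady state of Theorem~\ref{thm:steady-state}, which holds under the spectral-radius condition \eqref{eq:spectral_radius_condition}:
\[
v_\Omega^*=G_S\Bigl(S_\Omega W_{\Omega\partial}\psi+(I_\Omega-S_\Omega)\phi\Bigr),
\]
with $G_S=(I_\Omega-S_\Omega W_{\Omega\Omega})^{-1}$ as in Definition~\ref{def:green-operator}. With $B=\partial\Omega$ we have $W_{\Omega B}=W_{\Omega\partial}$. Distributing the linear operator $G_S$ over the two summands gives $G_SS_\Omega W_{\Omega B}\psi+G_S(I_\Omega-S_\Omega)\phi$. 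By Definition~\ref{def:boundary-response-matrix}, the first term is exactly $U_B\psi$, which yields \eqref{eq:boundary-wise-decomposition}.

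Next I will check that the entrywise interpretation in Definition~\ref{def:boundary-response-matrix} is consistent with this identity. Set $\phi=0$ and $\psi=e_k$. Uniqueness of the steady state (Theorem~\ref{thm:steady-state}) then shows that the steady state equals $U_Be_k$, the $k$th column of $U_B$. Linearity in $(\psi,\phi)$ then expresses $v_\Omega^*$ as a superposition of the boundary-wise responses $\sum_{k\in B}\psi_k U_Be_k$ plus the response to the interior initial opinions. This is the ``boundary-wise'' reading of the decomposition.

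There is no real obstacle here, since the identity is pure linear algebra. The only points needing care are bookkeeping ones. First, the block ordering must make $W_{\Omega B}$ coincide with $W_{\Omega\partial}$ from \eqref{eq:block-partition}. Second, invertibility of $I_\Omega-S_\Omega W_{\Omega\Omega}$, and hence the well-definedness of $G_S$ and $U_B$, must be taken from the spectral-radius assumption rather than proved afresh.
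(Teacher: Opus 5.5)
Your proof is correct and matches the paper's argument. Both substitute $U_B=G_SS_\Omega W_{\Omega B}$ into the steady-state formula of Theorem~\ref{thm:steady-state} and separate the boundary term from the interior-initial-condition term by linearity; your second paragraph, which checks the entrywise interpretation in Definition~\ref{def:boundary-response-matrix}, is a harmless extra.
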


\begin{proof}
Substitute \eqref{eq:boundary-response-matrix} into the steady-state formula \eqref{eq:steady_state} and collect the boundary and interior-initial-condition terms.
\end{proof}

\begin{proposition}[Positivity, bounds, and susceptibility monotonicity]
\label{prop:boundary-response-properties}
Suppose $W$ is nonnegative and row-substochastic, $0\leq S_\Omega\leq I_\Omega$, and $\rho(S_\Omega W_{\Omega\Omega})<1$. Then $U_B\geq0$ entrywise and
\begin{equation}
\label{eq:boundary-response-bounds}
0\leq (U_B)_{jk}\leq1.
\end{equation}
Moreover, for every $j\in\Omega$,
\begin{equation}
\label{eq:boundary-response-row-sum}
0\leq\sum_{k\in B}(U_B)_{jk}\leq1.
\end{equation}
The left inequality is strict if $j$ is reachable from at least one boundary source in the influence-arrow convention through a directed path whose nonboundary vertices have positive susceptibility.

If $S_\Omega\leq\widetilde S_\Omega$ entrywise and both associated interior systems satisfy their spectral-radius conditions, then
\begin{equation}
\label{eq:boundary-response-monotonicity}
U_B(S_\Omega)\leq U_B(\widetilde S_\Omega)
\qquad\text{entrywise}.
\end{equation}
\end{proposition}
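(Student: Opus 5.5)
The argument rests on the Neumann series of Definition~\ref{def:green-operator}: under $\rho(S_\Omega W_{\Omega\Omega})<1$ we have $G_S=\sum_{m\geq0}(S_\Omega W_{\Omega\Omega})^m$.

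\emph{Positivity.} Every term of the series is entrywise nonnegative, so $G_S\geq0$. Since $S_\Omega\geq0$ and $W_{\Omega B}\geq0$, the product $U_B=G_SS_\Omega W_{\Omega B}$ is also nonnegative.

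\emph{Row-sum bound.} Let $\mathbf 1$ denote all-ones vectors of the appropriate sizes. Row-substochasticity of $W$ gives $W_{\Omega\Omega}\mathbf 1+W_{\Omega B}\mathbf 1\leq\mathbf 1$. Multiplying by $S_\Omega\geq0$ and writing $M=S_\Omega W_{\Omega\Omega}$, we get
\[
S_\Omega W_{\Omega B}\mathbf 1\leq S_\Omega\mathbf 1-M\mathbf 1\leq\mathbf 1-M\mathbf 1=(I_\Omega-M)\mathbf 1 .
\]
The second step uses $S_\Omega\leq I_\Omega$. Applying the nonnegative matrix $G_S=(I_\Omega-M)^{-1}$ preserves the inequality, so $U_B\mathbf 1\leq\mathbf 1$. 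This is \eqref{eq:boundary-response-row-sum}. Because all entries are nonnegative, each entry is at most its row sum, which gives \eqref{eq:boundary-response-bounds}.

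\emph{Strictness.} Take a path $k=i_0\to i_1\to\cdots\to i_\ell=j$ with $k\in B$. Replace $k$ by the last boundary vertex on the path, so that $i_1,\dots,i_\ell\in\Omega$. Each arrow gives $W_{i_{r+1}i_r}>0$, and each $s_{i_r}>0$ for $r\geq1$. Then $(S_\Omega W_{\Omega B})_{i_1k}>0$, and $(M^{\ell-1})_{j i_1}\geq\prod_{r=1}^{\ell-1}s_{i_{r+1}}W_{i_{r+1}i_r}>0$. Since $G_S\geq M^{\ell-1}$ entrywise,
\[
(U_B)_{jk}\geq (M^{\ell-1})_{ji_1}(S_\Omega W_{\Omega B})_{i_1k}>0 .
\]

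\emph{Monotonicity.} Assume $S_\Omega\leq\widetilde S_\Omega$, both diagonal and nonnegative. Then $S_\Omega W_{\Omega\Omega}\leq\widetilde S_\Omega W_{\Omega\Omega}$ entrywise. By induction, products of nonnegative matrices are monotone, so each power satisfies $M^m\leq\widetilde M^m$. Summing the two convergent series gives $G_S\leq\widetilde G_S$. Likewise $S_\Omega W_{\Omega B}\leq\widetilde S_\Omega W_{\Omega B}$. A product of two entrywise-ordered nonnegative factors is ordered, which gives \eqref{eq:boundary-response-monotonicity}.

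The one step needing care is the row-sum bound: $G_S$ must be applied to a vector inequality, which is valid only because $G_S\geq0$. Everything else is routine manipulation of the nonnegative Neumann series.
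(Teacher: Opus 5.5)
Your proof is correct and follows essentially the paper's route: the nonnegative Neumann series gives positivity and termwise monotonicity in $S_\Omega$, and a single strictly positive path term gives strictness (your reduction to the last boundary vertex on the path makes explicit a step the paper leaves implicit). The only variation is the row-sum bound, where you show that $\mathbf 1$ is a supersolution, $S_\Omega W_{\Omega B}\mathbf 1\leq(I_\Omega-M)\mathbf 1$, and apply $G_S\geq0$, whereas the paper argues that the unit-boundary, zero-interior-initial-condition iteration preserves $[0,1]$; both rest on the same inequality $M\mathbf 1+S_\Omega W_{\Omega B}\mathbf 1\leq\mathbf 1$, and your static version simply avoids passing to the limit of the iteration.
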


\begin{proof}
The Neumann representation gives
\[
U_B=\sum_{r=0}^{\infty}(S_\Omega W_{\Omega\Omega})^rS_\Omega W_{\Omega B},
\]
which is entrywise nonnegative. The unit-boundary, zero-interior-initial-condition experiment preserves the interval $[0,1]$, proving \eqref{eq:boundary-response-bounds} and \eqref{eq:boundary-response-row-sum}. The stated reachability condition supplies at least one strictly positive path contribution. Finally, every term in the displayed nonnegative path expansion is entrywise nondecreasing in the diagonal entries of $S_\Omega$, which proves \eqref{eq:boundary-response-monotonicity}.
\end{proof}

Interpreted as a network response matrix, the rows of $U_B$ summarize target-side receptivity to the declared boundary sources, whereas columns summarize the aggregate reach of individual fixed sources. The matrix is not an adjacency matrix, a transition matrix, or a topology-only kernel: it is a model-implied steady-response operator that combines multistep propagation, edge weights, boundary placement, and susceptibility.
When every vertex is scanned in turn as a source, the entries associated with target $b$ form column $b$ of the all-source response matrix $R$; their off-diagonal average is the reception score $Q_b$. Thus a high reception score records broad modeled responsiveness across the declared source set, rather than merely a large number or weight of incoming ties.

\subsection{Dynamic companion matrices and basic response properties}
\label{subsec:all-vertex-response-properties}

For completeness, the dynamic companion matrices are
\begin{equation}
\label{eq:dynamic-response-matrices}
\mathcal T_{ab}:=T_b^0(a),
\qquad
\mathcal E_{ab}:=E_{ab},
\qquad
\mathcal S_{ab}(\varepsilon):=T_b^{\mathrm{stab}}(a;\varepsilon).
\end{equation}
The empirical analyses in this article emphasize $R=\mathcal U^\infty$. The timing matrices in \eqref{eq:dynamic-response-matrices} should be reported only when their distinct substantive interpretation is used; otherwise, they may be supplied as supplementary material.

\begin{proposition}[Basic response-matrix properties]
\label{prop:response-matrix-properties}
For every $a,b\in V$ for which the source-indexed experiment is stable,
\begin{equation}
\label{eq:response-matrix-bounds}
0\leq R_{ab}\leq1,
\qquad
R_{aa}=1.
\end{equation}
If $b$ is not reachable from $a$ in the influence graph, then $R_{ab}=0$. The diagonal entries are excluded from source and target summary scores below because they reflect the imposed unit value of the source rather than propagation to another node.
\end{proposition}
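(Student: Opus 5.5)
The plan is to reduce every claim to the boundary response matrix of Definition~\ref{def:boundary-response-matrix}, applied to the source-indexed partition $B_a,\Omega_a$. Fix a source $a$ for which the source-indexed system is stable, i.e.\ $\rho(S^{[a]}_{\Omega_a}W_{\Omega_a\Omega_a})<1$. By \eqref{eq:all-source-response}, row $a$ of $R$ consists of three blocks. The diagonal entry is $R_{aa}=1$ by definition, since $a\in B_a$ carries the prescribed boundary value one. Entries at other baseline-stubborn vertices $b\in B_a\setminus\{a\}$ are $0$ by definition. The interior entries are $R_{ab}=(U_{B_a}\psi^{[a]})_b=(U_{B_a})_{ba}$ for $b\in\Omega_a$, because $\psi^{[a]}$ is the indicator of $a$ on $B_a$. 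Thus $R_{aa}=1$ is immediate, and the two-sided bound only needs to be checked on $\Omega_a$.

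For $b\in\Omega_a$, I will invoke Proposition~\ref{prop:boundary-response-properties} for the pair $(B_a,\Omega_a)$. Its hypotheses hold: $W$ is unchanged, so it remains nonnegative and row-substochastic; $S^{[a]}_{\Omega_a}$ has diagonal entries $s_k\in(0,1]$; and the spectral radius condition is the assumed stability. Then \eqref{eq:boundary-response-bounds} gives $0\le (U_{B_a})_{ba}\le 1$.

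For the unreachability claim, suppose $b\neq a$ is not reachable from $a$ in $G_W$. If $b\in B_a\setminus\{a\}$, then $R_{ab}=0$ by definition. If $b\in\Omega_a$, I will use the Neumann expansion
\[
(U_{B_a})_{ba}=\sum_{r\ge0}\bigl[(S^{[a]}_{\Omega_a}W_{\Omega_a\Omega_a})^rS^{[a]}_{\Omega_a}W_{\Omega_aB_a}\bigr]_{ba},
\]
which is a sum over index sequences $b=i_0,i_1,\dots,i_r\in\Omega_a$ with terminal boundary index $a$. Each such term is a product containing $W_{i_0i_1}\cdots W_{i_{r-1}i_r}W_{i_ra}$. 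By the receive-from convention, this product is positive only if $a\to i_r\to\cdots\to i_0=b$ is a directed path in $G_W$. Unreachability therefore makes every term vanish, so $R_{ab}=0$. This is also consistent with Proposition~\ref{prop:arrival-distance}.

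The main obstacle, although a mild one, is the upper bound $\le1$. Proposition~\ref{prop:boundary-response-properties} justifies it only through an invariance argument. If I need it to be self-contained, I will argue as follows. Let $x=U_{B_a}\mathbf 1_{B_a}$, so that $x=Mx+S_{\Omega_a}W_{\Omega_aB_a}\mathbf 1$ with $M=S^{[a]}_{\Omega_a}W_{\Omega_a\Omega_a}$. The vector $y=\mathbf 1-x$ then satisfies $y=My+c$, where $c=\mathbf 1-S_{\Omega_a}(W_{\Omega_a\Omega_a}\mathbf 1+W_{\Omega_aB_a}\mathbf 1)\ge0$ because $W$ is row-substochastic and $s_k\le1$. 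Hence $y=(I-M)^{-1}c\ge0$ by the Neumann series. Combined with nonnegativity, this gives $0\le (U_{B_a})_{ba}\le x_b\le 1$. The remark about excluding the diagonal from the summary scores is definitional and needs no proof.
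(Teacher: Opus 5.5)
Your proof is correct, but it follows a different route from the paper's.

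\textbf{The paper's route.} The paper argues dynamically in one line. The source-indexed FJ update maps $[0,1]^n$ into itself while keeping $v^{[a]}_a(t)=1$, and a positive value can only spread along influence arrows, so the steady response inherits both properties. That argument tacitly uses the fact that the algebraically defined $R$ in \eqref{eq:all-source-response} coincides with the limit $\bar v^{[a]}$ of the trajectory, which holds under stability by Theorem~\ref{thm:rate-spectral-radius}. As a by-product it gives $v^{[a]}(t)\in[0,1]^n$ for every $t$, which is what the transient descriptors need.

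\textbf{Your route.} You work directly with the resolvent formula that defines $R$. You identify the interior entries of row $a$ with column $a$ of $U_{B_a}$, and you obtain nonnegativity and the support claim from the Neumann path expansion. You then replace the invariance step by the comparison identity $\mathbf{1}-x=(I-M)^{-1}c$ with $c\geq 0$. That step needs only $\rho(M)<1$, not convergence of trajectories.

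\textbf{What each buys.} Your argument is more self-contained. It also gives an algebraic proof of the row-sum bound \eqref{eq:boundary-response-row-sum} rather than one by appeal to invariance. Your path expansion proves a sharper statement: $R_{ab}=0$ unless $b$ is reachable from $a$ by a directed path whose intermediate vertices all lie in $\Omega_a$, that is, avoid the other baseline-stubborn vertices. The paper's route is shorter and more intuitive, but it leaves the identification of $R$ with the trajectory limit implicit.
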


\begin{proof}
The source-indexed dynamics preserve $[0,1]$, source $a$ remains fixed at one, and the absence of an influence path prevents the source value from reaching an unreachable target.
\end{proof}

\subsection{Response graph and dynamics-induced centralities}
\label{subsec:response-centralities}

The all-source matrix $R$ generally records positive multistep effects between nonadjacent vertices. It therefore induces a response graph rather than merely reweighting the original edge set. For a threshold $\tau\geq0$, define the off-diagonal response matrix
\begin{equation}
\label{eq:thresholded-response}
R^{(\tau)}_{ab}:=
\begin{cases}
R_{ab}, & a\neq b\text{ and }R_{ab}>\tau,\\
0, & \text{otherwise},
\end{cases}
\end{equation}
and the directed response graph
\begin{equation}
\label{eq:response-graph}
G_R^{(\tau)}=(V,E_R^{(\tau)}),
\qquad
E_R^{(\tau)}:=\{(a,b):R^{(\tau)}_{ab}>0\}.
\end{equation}
An edge $a\to b$ in $G_R^{(\tau)}$ means that the source-indexed FJ experiment at $a$ produces a positive steady response at $b$. The choice $\tau=0$ gives the mathematical response graph; positive thresholds used for visualization or numerical sparsification must be reported and subjected to sensitivity analysis.

\begin{definition}[Primary broadcasting and reception scores]
\label{def:primary-response-scores}
The \emph{broadcasting strength} and \emph{reception strength} of node $a$ are, respectively,
\begin{equation}
\label{eq:broadcasting-strength}
B_a:=\frac{1}{n-1}\sum_{b\neq a}R_{ab},
\qquad
Q_a:=\frac{1}{n-1}\sum_{b\neq a}R_{ba}.
\end{equation}
\end{definition}

$B_a$ summarizes the average off-diagonal steady response generated by source $a$, whereas $Q_a$ summarizes the average response received by target $a$ across the collection of source experiments. Thus $Q_a$ is scientifically useful as a model-defined measure of cross-source exposure or vulnerability: a high value indicates that the target's modeled steady state is broadly responsive to the admissible source interventions under the declared network, boundary, and susceptibility specification. It is not an intrinsic psychological trait and should not be read as an observational estimate of persuadability.

For a distance-based source score, let $d_R^{(\tau)}(a,b)$ be the unweighted directed shortest-path distance in $G_R^{(\tau)}$. The \emph{response-harmonic broadcasting score} is
\begin{equation}
\label{eq:response-harmonic-score}
H_a^{\mathrm{out}}:=\frac{1}{n-1}
\sum_{\substack{b\neq a\\d_R^{(\tau)}(a,b)<\infty}}
\frac{1}{d_R^{(\tau)}(a,b)}.
\end{equation}
Unlike a global-minimum-weight closeness formula, \eqref{eq:response-harmonic-score} is defined naturally on directed and possibly disconnected response graphs.

If a path-mediated score is required, define positive response lengths by
\begin{equation}
\label{eq:response-lengths}
\ell_{ab}:=\frac{1}{R^{(\tau)}_{ab}}
\qquad\text{for }(a,b)\in E_R^{(\tau)}.
\end{equation}
A directed weighted betweenness score can then be calculated from shortest paths under these lengths. Since this score measures the role of intermediaries in the response graph rather than direct source impact, it should be reported only when that distinction is substantively relevant.

Because rows of the response matrix index sources and columns index targets, orientation must be fixed before applying a recursive ranking routine. A source-oriented eigenvector score is the left Perron vector of $R^{(\tau)}$, whereas a target-oriented score is the corresponding right Perron vector. Ordinary PageRank on the directed response graph ranks targets by incoming response flow and is therefore reception-oriented; applying PageRank to the reversed response graph gives the corresponding broadcasting-oriented ranking. Any reported variant must state its orientation, damping or regularization, and dangling-node treatment. Implementation details and toy-network checks belong in the reproducibility materials rather than the main text.

For any node score $c$, the quantity
\begin{equation}
\label{eq:raw-centralization}
D_c(G_R^{(\tau)})
:=\sum_{a\in V}\left(\max_{b\in V}c_b-c_a\right)
\end{equation}
is an \emph{unnormalized dispersion statistic}. It should not be compared across response graphs of different sizes, thresholds, or score ranges without a defensible denominator. Cross-network comparisons should use a stated normalization or a scale-free inequality summary, such as a Gini coefficient, with its interpretation and benchmark made explicit.

\subsection{Relation to structural centrality and influence concepts}
\label{subsec:relation-structural-centrality}

The response-based quantities above are intended to complement, not replace, topology-only centralities. Centrality interpretation depends on the network process and question of interest \citep{Borgatti2005,BorgattiEverett2006,BoldiVigna2014}. Structural scores use the graph or its adjacency weights as input. In contrast, $R$ is generated by an FJ source-response experiment and therefore changes when susceptibility, boundary treatment, or the influence-update model changes. This distinction is especially important when nodes with similar topology have different propensities to retain their initial opinions.

\begin{table}[t]
\centering
\caption{Conceptual comparison between structural centralities and response-based quantities.}
\label{tab:structural-response-comparison}
\begin{tabular}{p{0.22\linewidth}p{0.31\linewidth}p{0.39\linewidth}}
\hline
\textbf{Structural concept} & \textbf{Primary input} & \textbf{Response-based distinction}\\
\hline
Degree & Adjacent links or their weights & $B_a$ aggregates the modeled steady response of all targets, including multistep effects.\\
Closeness or harmonic centrality & Directed shortest-path reachability & $H_a^{\mathrm{out}}$ measures reachability in the response graph induced by the FJ experiment.\\
Betweenness & Shortest paths in the observed graph & Response betweenness uses paths weighted by modeled steady responses rather than observed adjacency alone.\\
Eigenvector and PageRank & Recursive prestige or stationary flow in an observed graph & Their response-graph analogues require an explicit broadcasting or reception orientation.\\
Katz, communicability, and diffusion rankings & Attenuated walks, walk-based connectivity, or process-dependent flow & Proposition~\ref{prop:response-resolvent-factorization} gives the exact resolvent relation; source normalization, boundary conditioning, and heterogeneous susceptibility prevent reduction to a topology-only scalar resolvent in general.\\
\hline
\end{tabular}
\end{table}

Proposition~\ref{prop:response-resolvent-factorization} supplies the exact comparison with Katz-type resolvents. With strictly positive baseline susceptibilities, $R=D_G^{-1}G^\top$ for $G=(I-SW)^{-1}$; hence the response matrix is a source-normalized transposed FJ resolvent. Only when $S=\beta I$ and the diagonal of $G$ is constant does broadcasting become an affine transformation of the source-oriented Katz vector. Communicability uses the distinct exponential kernel $\exp(A)$ \citep{EstradaHatano2008}, and dynamic communicability and random-walk diffusion represent other process-dependent comparators \citep{GrindrodEtAl2011,MasudaPorterLambiotte2017}. HITS addresses a different hub--authority link-prestige question and is therefore outside the present empirical comparison \citep{Kleinberg1999}.

The distinction between broadcasting and reception also clarifies the relation to social-power and key-player questions. Broadcasting scores address which fixed source produces broad modeled response; reception scores address which target is responsive across sources; and timing measures address when a response first arrives or stabilizes. These questions may be aligned in some networks and distinct in others. Their empirical utility should be assessed by transparent comparisons with the structural baselines and by sensitivity analyses over susceptibility profiles, source values, thresholds, and directed-weight conventions.

\section{Computational Evaluation Across Empirical, Benchmark, and Synthetic Networks}
\label{sec:computational-evaluation}

\subsection{Objectives and evaluation design}
\label{subsec:evaluation-design}

This section evaluates the response-based broadcasting and reception scores from Section~\ref{subsec:response-centralities} across eight networks that differ in size, directionality, weights, community structure, and degree heterogeneity. The objective is to characterize how the model-implied response operator behaves under declared Friedkin--Johnsen (FJ) interventions and to compare its source- and target-oriented summaries with topology-only baselines.

For every network and every candidate source $a$, we set the source value to one, make $a$ fully stubborn, and set all other initial opinions to zero as in Section~\ref{subsec:source-response-scope}. The primary summaries are the off-diagonal broadcasting and reception strengths
\[
B_a=\frac{1}{n-1}\sum_{b\ne a}R_{ab},
\qquad
Q_a=\frac{1}{n-1}\sum_{b\ne a}R_{ba}.
\]
Thus, $B_a$ measures average modeled steady response generated by a source, whereas $Q_a$ measures average modeled steady response received by a target across source interventions.

We use twelve independently generated baseline susceptibility profiles for each network. In every profile, non-source susceptibilities are independent draws from a $\mathrm{Beta}(4,2)$ distribution rescaled to $[0.15,0.90]$. Because the receive-from matrix $W$ is nonnegative and row-substochastic and every non-source susceptibility is at most $0.90$, the bound $\rho(SW)\leq\lVert SW\rVert_\infty\leq\max_i s_i<1$ guarantees convergence in every source-indexed experiment. Reported node scores are means across the twelve profiles.

Each input edge is stored in causal communication direction $u\to v$, meaning that $u$ can influence $v$. The FJ receive-from matrix is formed with $W_{vu}>0$ when the edge $u\to v$ is present. Every row with positive raw incoming weight is normalized to sum to one. A vertex with no receive-from neighbours is retained, is assigned an all-zero row, and receives no added self-loop; hence the resulting $W$ is nonnegative and row-substochastic rather than necessarily row-stochastic. For a non-source zero-row vertex, the social-input term is zero and, under the zero interior-initial-condition response experiment, every off-diagonal response entry targeting that vertex is zero; if the vertex is selected as a source, it is instead made a unit stubborn boundary and can still propagate through any outgoing arcs. Undirected observed or benchmark networks are represented by reciprocal arcs. Because all baseline susceptibility values are strictly positive before a source is made stubborn, Proposition~\ref{prop:response-resolvent-factorization} gives $R_{ab}=G_{ba}/G_{aa}$ with baseline resolvent $G=(I-SW)^{-1}$. We form $G$ by dense linear algebra for the six networks with at most 986 processed vertices. For the Conover and Congress Twitter cases, exact sparse full-system solves together with exact diagonal-Green calculations by strongly connected diagonal blocks recover the same score summaries without materializing a dense $R$. Solver residuals, input hashes, seeds, and package versions are recorded in the reproducibility materials.

\subsection{Benchmark networks and preprocessing}
\label{subsec:network-portfolio}

The email-Eu-core case is the only directed observed network in the benchmark collection and is therefore important for checking the source-to-target convention beyond reciprocal benchmark graphs. The karate-club network is retained because its size makes response maps and rankings transparent. The Les Mis\'{e}rables network is included as a weighted computational benchmark, not as evidence of real interpersonal influence. The synthetic networks are not intended to mimic a single real social system; instead, they provide controlled settings in which directionality, modularity, and degree heterogeneity can be separated.

Table~\ref{tab:eight-network-portfolio} summarizes the eight-network benchmarks. Three observed cases preserve transparent reference settings; three synthetic directed families isolate null connectivity, community bottlenecks, and asymmetric hub structure; and two Twitter interaction networks add substantively interpreted, directed weighted communication cases. The figures use the processed arc counts reported in the table, so that reciprocal representations of undirected networks remain explicit.

\begin{table}[t]
\centering
\caption{Eight-network evaluation benchmarks. The arc count is the number of directed arcs used by the FJ analysis after the stated preprocessing. The two Twitter cases are directed weighted communication/interaction networks; neither is treated as a direct observation of interpersonal opinion influence.}
\label{tab:eight-network-portfolio}
\scriptsize
\begin{tabular}{p{1.80cm}p{1.35cm}rrp{4.60cm}}
\hline
Network & Role & $n$ & arcs & Construction and interpretation  \\
\hline
Karate club & Observed benchmark & 34 & 156 & Friendship ties represented as reciprocal unit-weight arcs \citep{Zachary1977}. \\
email-Eu-core & Observed directed & 986 & 24,929 & Emails represented as directed arcs; self-loops removed; no symmetrization \citep{YinEtAl2017,LeskovecKleinbergFaloutsos2007}. \\
Les Mis\'erables & Weighted benchmark & 77 & 508 & Weighted character co-appearance ties represented as reciprocal arcs \citep{Knuth1993}. \\
Directed Erd\H{o}s--R\'enyi & Directed null & 900 & 8,917 & One seeded directed $G(n,p)$ draw, expected out-degree 10 \citep{ErdosRenyi1959}. \\
Directed DC-SBM & Directed community & 900 & 9,172 & Four-block directed degree-corrected construction \citep{KarrerNewman2011}. \\
Directed preferential attachment & Directed hub & 900 & 3,172 & Seeded directed preferential-attachment construction \citep{BarabasiAlbert1999}. \\
Conover Twitter retweet & Observed retweet & 18,470 & 48,365 & Repeated 2010 political retweets aggregated to source-to-retweeter weights; no symmetrization \citep{Conover2011}. \\
Congress Twitter & Observed interaction & 475 & 13,289 & Supplied 2022 pairwise Twitter interaction/transmission-probability weights retained \citep{Fink2023CongressTwitter}. \\
\hline
\end{tabular}
\end{table}

The Conover retweet graph represents political retweet events in the six weeks preceding the 2010 U.S. congressional midterm elections. An arc from $u$ to $v$ records that content originating at $u$ was retweeted by $v$, and repeated events are aggregated as arc weights. The Congress graph retains the supplied directed pairwise interaction/transmission-probability weights among members of the 117th U.S. Congress. These two Twitter cases enlarge the substantive scope of the evaluation networks, but their interaction topologies are not treated as direct observations of interpersonal opinion change. Named accounts, where used in descriptive rank contrasts, are public-network labels only and are not evidence about personal beliefs, persuasion, or actual opinion change. The synthetic networks remain controlled constructions rather than empirical ground truth.

\subsection{Results}
\label{subsec:evaluation-results}

Figures~\ref{fig:broadcasting-outstrength} and \ref{fig:reception-instrength} retain node-level scatter plots for the six original cases. Across those cases, broadcasting is strongly positively associated with out-strength in five networks, whereas the directed preferential-attachment construction has a substantially lower mean Spearman correlation of $0.438$. The complete eight-network comparison below moves beyond correlations by also reporting top-ten overlap, a model-based source-selection diagnostic, and maximum rank displacement.

\begin{figure}[t]
\centering
\includegraphics[width=\textwidth]{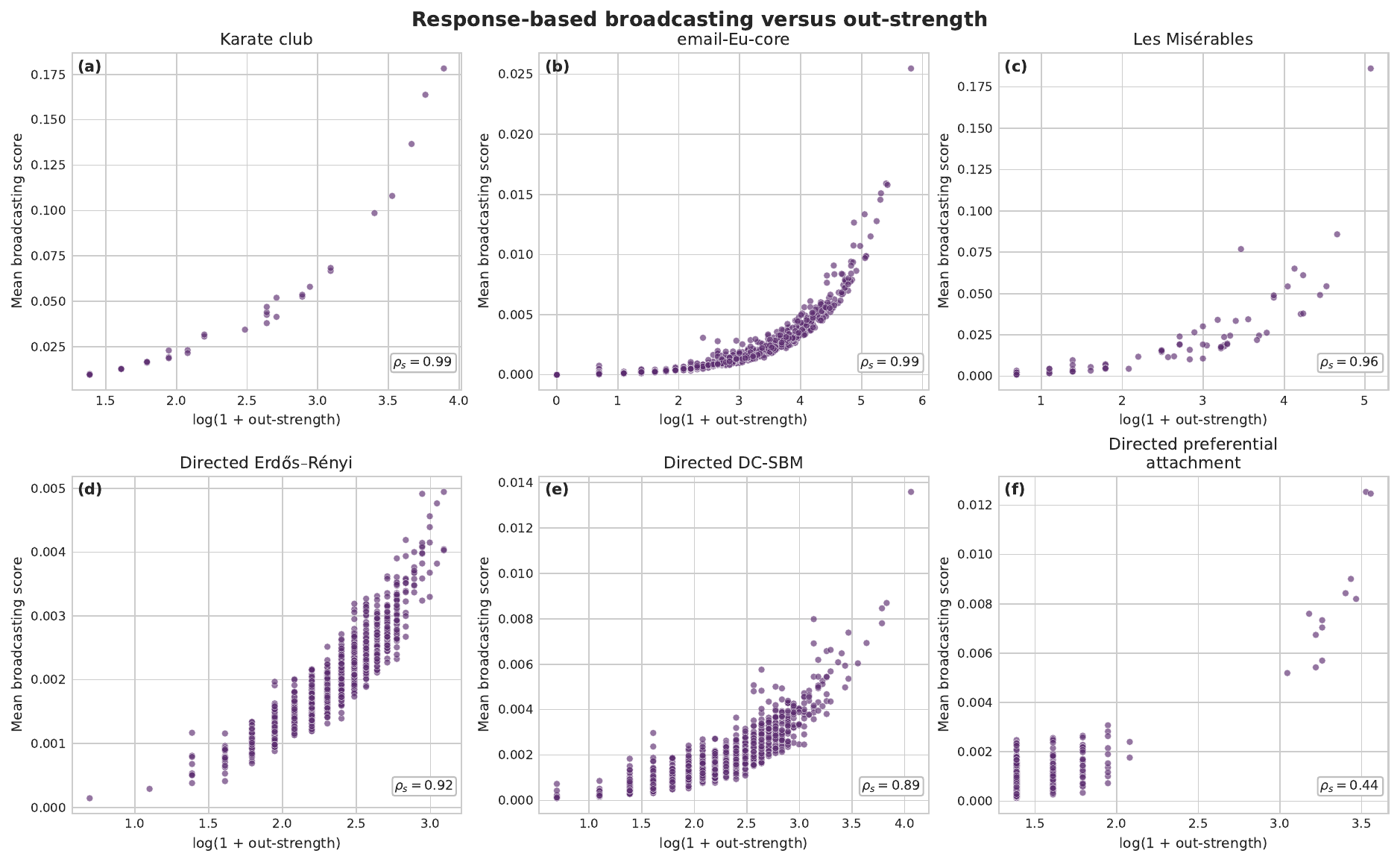}
\caption{Mean FJ broadcasting strength versus source out-strength in the six original networks. Each point is a node. Scores are averaged over twelve susceptibility profiles; the displayed $\rho_s$ is the Spearman rank correlation. The complete eight-network comparison is reported in Table~\ref{tab:eight-network-response-comparison} and Figure~\ref{fig:eight-network-response-divergence}.}
\label{fig:broadcasting-outstrength}
\end{figure}

Figure~\ref{fig:reception-instrength} retains the corresponding six-case view of reception. Its small or negative correlations in the reciprocal-arc networks and in email-Eu-core illustrate that reception is a response-defined target property rather than a relabeling of local in-strength. The full eight-network results below show that this distinction persists in the added Twitter cases.

\begin{figure}[t]
\centering
\includegraphics[width=\textwidth]{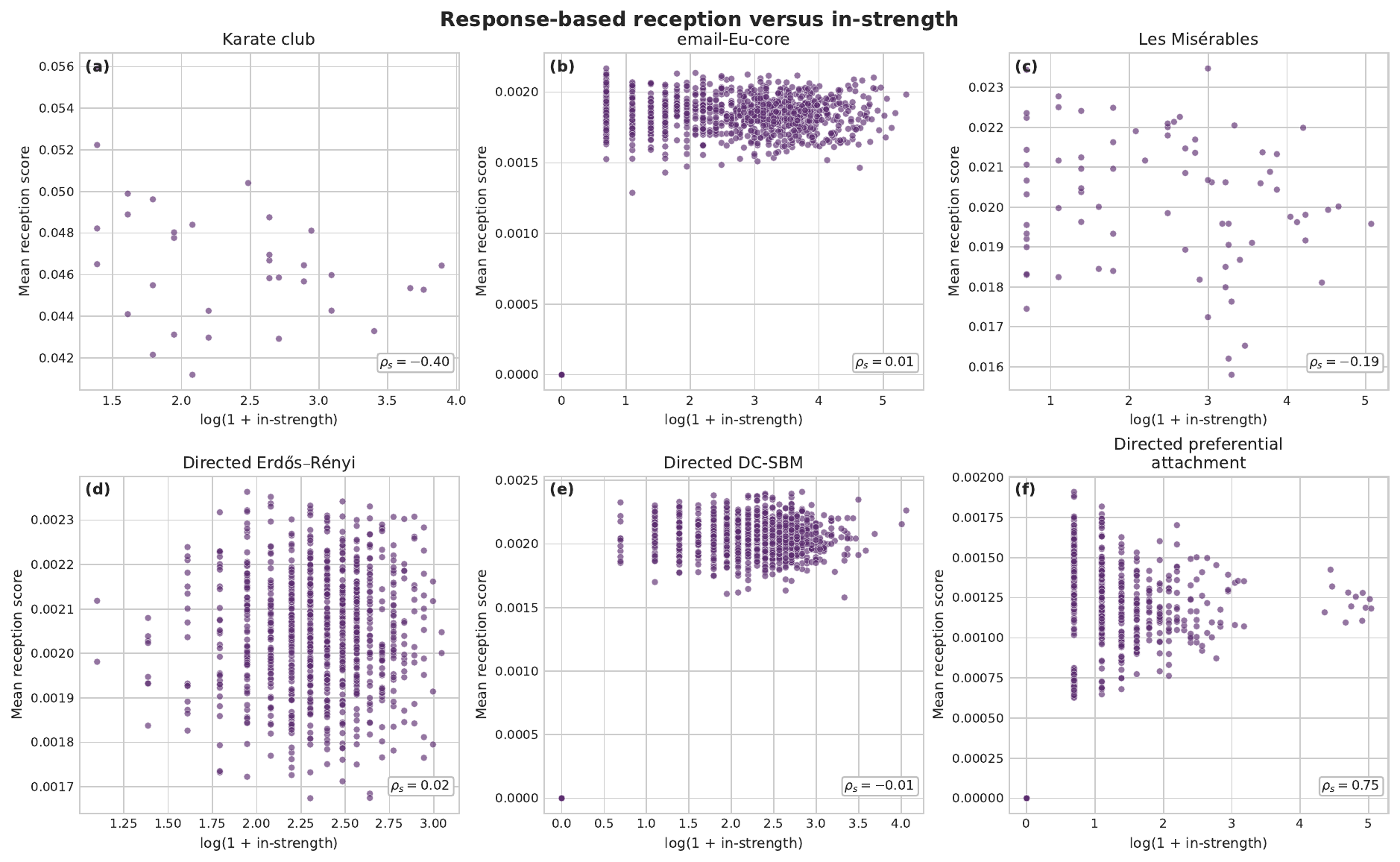}
\caption{Mean FJ reception strength versus target in-strength in the six original networks. Each point is a node. Scores are averaged over twelve susceptibility profiles; the displayed $\rho_s$ is the Spearman rank correlation. The complete eight-network comparison is reported in Table~\ref{tab:eight-network-response-comparison} and Figure~\ref{fig:eight-network-response-divergence}.}
\label{fig:reception-instrength}
\end{figure}

The eight-network comparison in Table~\ref{tab:eight-network-response-comparison} demonstrates both agreement and nontrivial divergence. Broadcasting--out-strength rank correlations range from $0.441$ to $0.989$, but top-ten overlap ranges from $0.533$ to $0.917$ and the model-based top-ten selection efficiency ranges from $0.908$ to $0.995$. The weakest global association occurs in the directed preferential-attachment case ($\rho=0.441$); the Congress Twitter case also has a lower association ($\rho=0.829$) and mean top-ten overlap of $0.717$. Even in the Conover retweet graph, where the global correlation is $0.988$, the mean maximum rank displacement is 5,965 positions. Thus a high global correlation need not reproduce the complete source ranking.

The node-level contrasts make this distinction concrete. In the Congress Twitter network, \texttt{RepBobGood} has the highest weighted out-strength (rank $1$) but ranks $80$ by broadcasting; its weighted out-strength is $0.63055$, whereas its mean broadcasting score is $0.00108$. Conversely, \texttt{RepBost} ranks $468$ by weighted in-strength ($0.00478$) but ranks first by reception ($0.00440$). These contrasts show that local incoming or outgoing strength need not reproduce the multistep, source-conditioned FJ response ordering. Broadcasting is closely aligned with the deliberately comparable Katz baseline in these simulations ($\rho=0.965$--$0.999$), as Proposition~\ref{prop:response-resolvent-factorization} helps explain; reception provides the sharper empirical distinction, with near-zero top-ten overlap with weighted in-strength in several networks. The selection-efficiency column remains a model-based comparison under the same FJ objective.

\begin{table}[t]
\centering
\caption{Eight-network comparison of model-based FJ scores with strength-based benchmarks. Values are means across 12 independently generated $\mathrm{Beta}(4,2)$ susceptibility profiles rescaled to $[0.15,0.90]$. `Top-10' is the mean fraction of nodes shared by the FJ broadcasting top ten and the weighted out-strength top ten; `$Q$ top-10' is the corresponding mean fraction shared by the FJ reception top ten and the weighted in-strength top ten. `Efficiency' is the mean FJ broadcasting among the top ten nodes selected by weighted out-strength, divided by the corresponding mean for the FJ top ten; it is a model-based selection diagnostic, not external causal validation.}
\label{tab:eight-network-response-comparison}
\scriptsize
\resizebox{\textwidth}{!}{%
\begin{tabular}{lrrrrrr}
\hline
Network & $\rho(B,d^{out})$ & Top-10 & Efficiency & $\rho(B,\mathrm{Katz})$ & $\rho(Q,d^{in})$ & $Q$ top-10  \\
\hline
Karate club & 0.974 & 0.917 & 0.995 & 0.975 & -0.077 & 0.300 \\
email-Eu-core & 0.989 & 0.700 & 0.950 & 0.996 & 0.053 & 0.000 \\
Les Mis\'erables & 0.950 & 0.792 & 0.931 & 0.989 & 0.012 & 0.075 \\
Directed Erd\H{o}s--R\'enyi & 0.907 & 0.533 & 0.930 & 0.974 & 0.005 & 0.008 \\
Directed DC-SBM & 0.886 & 0.583 & 0.908 & 0.985 & 0.014 & 0.000 \\
Directed preferential attachment & 0.441 & 0.883 & 0.974 & 0.965 & 0.772 & 0.000 \\
Conover Twitter retweet & 0.988 & 0.792 & 0.986 & 0.999 & 0.587 & 0.000 \\
Congress Twitter & 0.829 & 0.717 & 0.909 & 0.989 & 0.030 & 0.008 \\
\hline
\end{tabular}%
}
\end{table}

\begin{figure}[t]
\centering
\includegraphics[width=\textwidth]{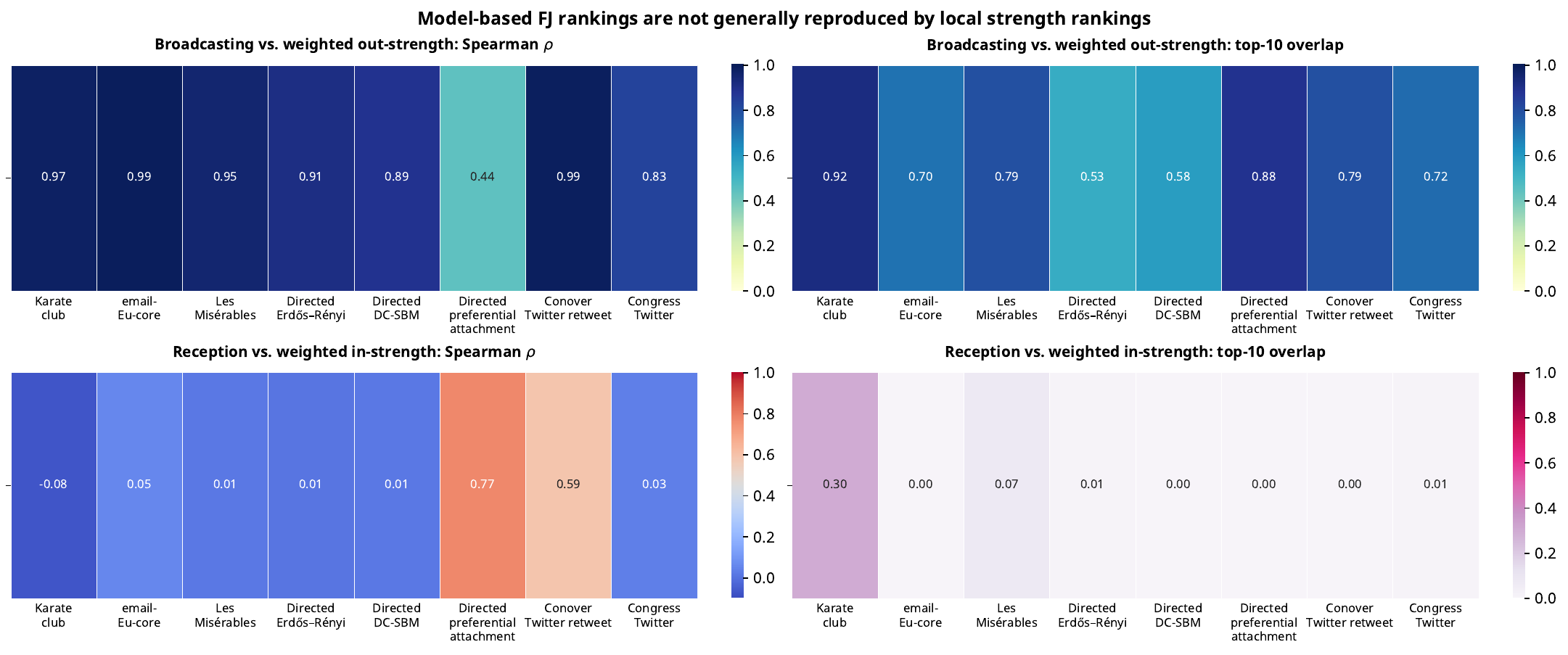}
\caption{Eight-network comparison of FJ response scores with local strength rankings. The upper panels show mean Spearman correlations and mean top-ten overlap between broadcasting and weighted out-strength; the lower panels show the corresponding quantities between reception and weighted in-strength. All values average twelve independently generated $\mathrm{Beta}(4,2)$ susceptibility profiles rescaled to $[0.15,0.90]$.}
\label{fig:eight-network-response-divergence}
\end{figure}

The detailed six-original-network score-distribution, robustness, and computational-performance results are moved to Supplementary Material Sections~S1--S2. The main text retains the complete eight-network comparison in Table~\ref{tab:eight-network-response-comparison} and the matched heterogeneous-versus-homogeneous susceptibility analysis below because these results directly quantify cross-network rank divergence and a decision-relevant change in the selected top broadcaster.

To demonstrate a decision-relevant consequence of susceptibility heterogeneity rather than a change in its mean level, we compare every heterogeneous profile with a homogeneous profile having the same mean susceptibility across the complete eight-network benchmarks. The broad broadcasting order is often stable (mean Spearman correlations $0.976$--$1.000$), but the selected leading source can change. In baseline profile 2 of the Conover retweet network, whose mean susceptibility is $0.6494$, the heterogeneous profile ranks source $2624$ first whereas the matched homogeneous profile ranks source $10391$ first. This is not merely a local rank shift: an analyst using the homogeneous approximation would select a different first broadcaster from an analyst using the profile-aware response calculation, despite identical mean susceptibility and a high global rank correlation. In the same comparison, node $2018$ moves from rank $1{,}530$ under homogeneous susceptibility to rank $2{,}870$ under heterogeneous susceptibility. The change arises because heterogeneous susceptibility reweights downstream continuation and absorption along different response paths even though the mean susceptibility is held fixed. Across the twelve Conover profiles, the top source changes in four cases and the largest individual displacement is $1{,}624$ positions. Heterogeneity can therefore change the practical source-selection decision even when broad rank correlation remains high.

\begin{table}[t]
\centering
\caption{Effect of heterogeneous susceptibility on FJ broadcasting rankings. Each heterogeneous profile is compared with a homogeneous profile having the same mean susceptibility. Values summarize 12 independently generated baseline profiles.}
\label{tab:eight-network-heterogeneity}
\scriptsize
\begin{tabular}{lrrrr}
\hline
Network & Mean $\rho$ & Mean top-10 overlap & Changed top source & Largest rank shift  \\
\hline
Karate club & 0.976 & 0.917 & 2/12 & 8 \\
email-Eu-core & 0.999 & 0.950 & 0/12 & 78 \\
Les Mis\'erables & 0.990 & 0.983 & 0/12 & 13 \\
Directed Erd\H{o}s--R\'enyi & 0.977 & 0.700 & 6/12 & 259 \\
Directed DC-SBM & 0.989 & 0.850 & 0/12 & 225 \\
Directed preferential attachment & 0.980 & 0.967 & 5/12 & 284 \\
Conover Twitter retweet & 1.000 & 0.967 & 4/12 & 1624 \\
Congress Twitter & 0.994 & 0.933 & 0/12 & 101 \\
\hline
\end{tabular}
\end{table}

\begin{figure}[t]
\centering
\includegraphics[width=\textwidth]{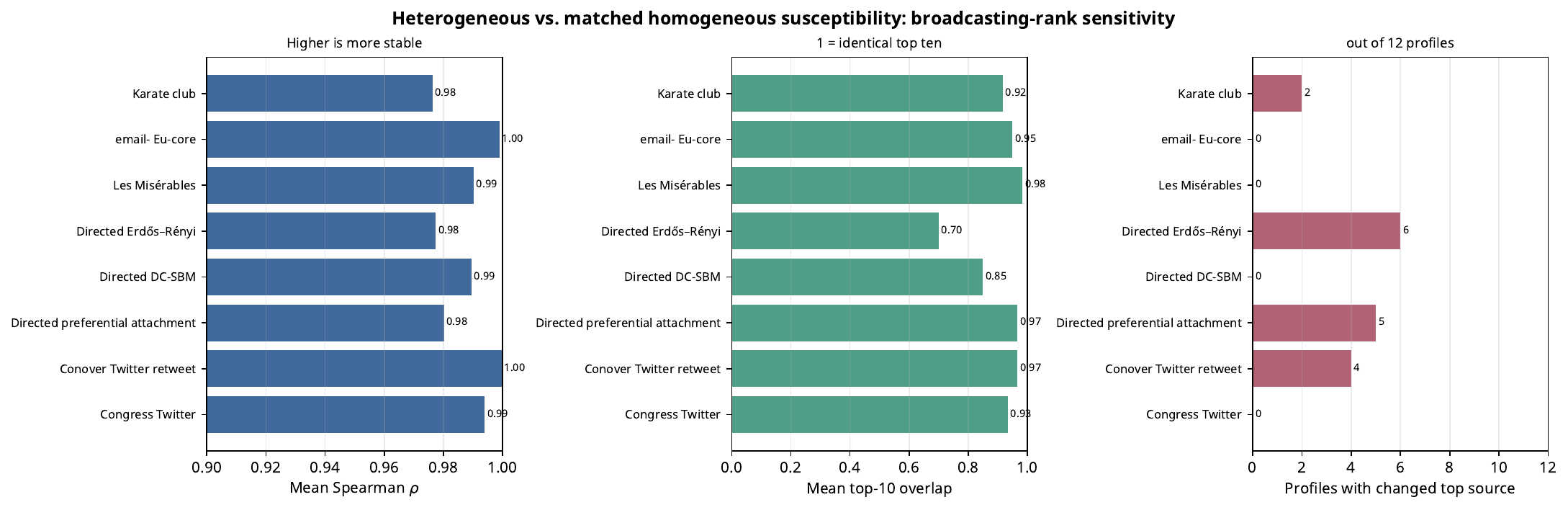}
\caption{Broadcasting-rank sensitivity to heterogeneous susceptibility across eight networks. Each heterogeneous profile is compared with a homogeneous profile having the same mean susceptibility. The panels report mean Spearman correlation, mean top-ten overlap, and the number of the twelve profiles in which the top broadcasting source changes.}
\label{fig:eight-network-heterogeneous-susceptibility}
\end{figure}

\subsection{Scope of the evaluation}

The present evaluation is a model-based computational study. The observed networks encode friendship ties, institutional email contacts, retweet activity, or congressional Twitter interactions; Les Mis\'{e}rables is a fictional co-appearance benchmark; and the three synthetic networks are generated examples. The findings therefore characterize the numerical behavior and comparative rankings of the proposed FJ response measures under the declared source interventions, susceptibility profiles, and edge conventions. Future work can assess these quantities against task-specific outcomes, temporal observations, or experimentally identified influence processes.

\section{Discussion, limitations, and use cases}
\label{sec:discussion}

\subsection{What the response scores measure}

The response construction distinguishes three quantities that should not be conflated. First, the broadcasting score $B_a$ is source-oriented: under the declared FJ source intervention, it summarizes the average steady response generated by source $a$ among the remaining nodes. Second, the reception score $Q_b$ is target-oriented: it is the average off-diagonal entry in column $b$ of the all-source response matrix and therefore summarizes how broadly target $b$ responds when the other nodes are considered in turn as sources. Third, structural centralities such as degree, betweenness, and PageRank are topology-based summaries. They may be associated with the response scores in a particular network and susceptibility profile, but they do not encode the FJ assumptions about stubbornness, heterogeneous susceptibility, or the chosen source intervention.

For a practitioner, a high $Q_b$ flags a target at which many admissible source interventions converge in their predicted steady-state effect. Such a node can be a priority location for monitoring information exposure, designing safeguards against unwanted persuasion, or assessing information accessibility, depending on the stated application. Mathematically, a high reception score records the joint effect of the target's position in the influence network and the declared susceptibility field: it identifies where cross-source responsiveness is concentrated, rather than merely where incoming ties are numerous. Thus reception is useful when the question is which targets are broadly responsive across alternative sources, while broadcasting is useful when the question is which sources generate broad response. Both are conditional, model-implied summaries rather than universal replacements for structural centrality.

\begin{figure}[t]
\centering
\includegraphics[width=0.96\textwidth]{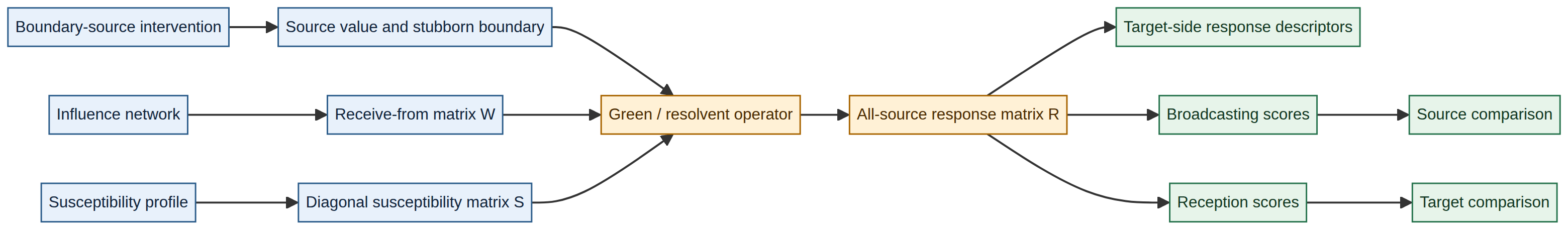}
\caption{Conceptual workflow for the response-based measures. Boundary-source interventions, the influence network, and susceptibility parameters determine the Green/resolvent operator and the all-source response matrix. Target-side response descriptors and source- or target-oriented centralities are then derived from that response matrix.}
\label{fig:fj-response-workflow}
\end{figure}

\subsection{Use cases and intervention interpretation}

The framework is useful for scenario analysis when an analyst wishes to compare hypothetical source interventions under an explicitly declared influence network and susceptibility profile. For example, broadcasting can compare candidate communication sources under one common FJ specification, while reception can identify targets whose modeled steady states are broadly responsive across the admissible source set. The resulting rankings support transparent sensitivity analysis over the assumptions that define the scenario.

The general formulation accommodates directed weighted influence networks and heterogeneous susceptibility profiles. The email-Eu-core, Conover retweet, and Congress Twitter cases illustrate the stated source-to-target convention on observed directed interaction networks; the synthetic families separately isolate null connectivity, community structure, and asymmetric hub formation. In each case, the response scores are calculated only after the edge convention, susceptibility profile, and source intervention have been specified.

\subsection{Limitations}

Several limitations delimit the present findings. The analysis uses linear FJ dynamics, a fixed influence network, and susceptibility values specified exogenously rather than estimated from observed opinion trajectories. The response scores may therefore change when the influence matrix, susceptibility distribution, source value, response threshold, or aggregation rule changes. Moreover, the observed networks are proxies for social or communication contact rather than direct observations of interpersonal opinion change. The computations consequently characterize the stated model and do not identify causal influence or validate a behavioral mechanism.

There are also computational and empirical-scope limitations. An all-source analysis can be costly on large or dynamic networks because it requires many source-specific response calculations and, in a multi-profile design, repeated susceptibility draws. The closed-form identity and sparse implementation make the present analyses feasible, but larger-scale use requires dedicated numerical study. The observed cases demonstrate the method on transparent benchmark and communication networks; they do not establish performance for a particular intervention task.

\subsection{Extensions}

Natural extensions include empirical estimation of susceptibility and influence weights from longitudinal opinion data, source-selection objectives tied to observed outcomes, and analyses of time-varying or multilayer networks. Other directions include signed influence, multidimensional opinions, adaptive susceptibility, and nonlinear opinion-update mechanisms. Each extension would require a new model formulation and validation strategy; it should not be inferred automatically from the present linear framework.

\section{Conclusion}
\label{sec:conclusion}

This article formulated Friedkin--Johnsen dynamics with fully stubborn sources as a discrete network boundary-value problem. The resulting Green-operator representation yields source-conditioned steady-state responses and supports target-side influenceability descriptors, broadcasting scores, and reception scores. The response--resolvent factorization clarifies the relationship to resolvent-based centralities: the all-source response is source-normalized, boundary-conditioned, and susceptibility-dependent in general, whereas a homogeneous constant-diagonal specialization yields an affine Katz-type broadcasting relation.

The eight-network computational evaluation shows both agreement and divergence relative to topology-only baselines under declared FJ interventions. Broadcasting often aligns with source-oriented Katz/resolvent rankings, but not always with local out-strength. Top-ten agreement, rank displacement, and the model-based selection diagnostic show that local strength does not invariably recover source ordering. Reception is more clearly distinct from target in-strength in several settings, while matched homogeneous-versus-heterogeneous comparisons show that susceptibility can change top-source identification even when broad rank correlation remains high.

\clearpage
\section*{Supplementary Material}
\label{sec:supplementary-material}

\subsection*{S1. Detailed six-original-network score distributions}

The score distributions in Figure~\ref{fig:response-score-distributions} further distinguish the six original network families. The directed preferential-attachment and directed DC-SBM networks display a small number of high-broadcasting nodes relative to their typical nodes, whereas the directed Erd\H{o}s--R\'{e}nyi network has a more compressed distribution. The observed email network also contains a pronounced source-response tail. These descriptive patterns are consistent with the intended role of the synthetic constructions: the three families expose null, modular, and hub-dominated propagation settings without treating any one as an empirical ground truth.

\begin{figure}[p]
\centering
\includegraphics[width=\textwidth]{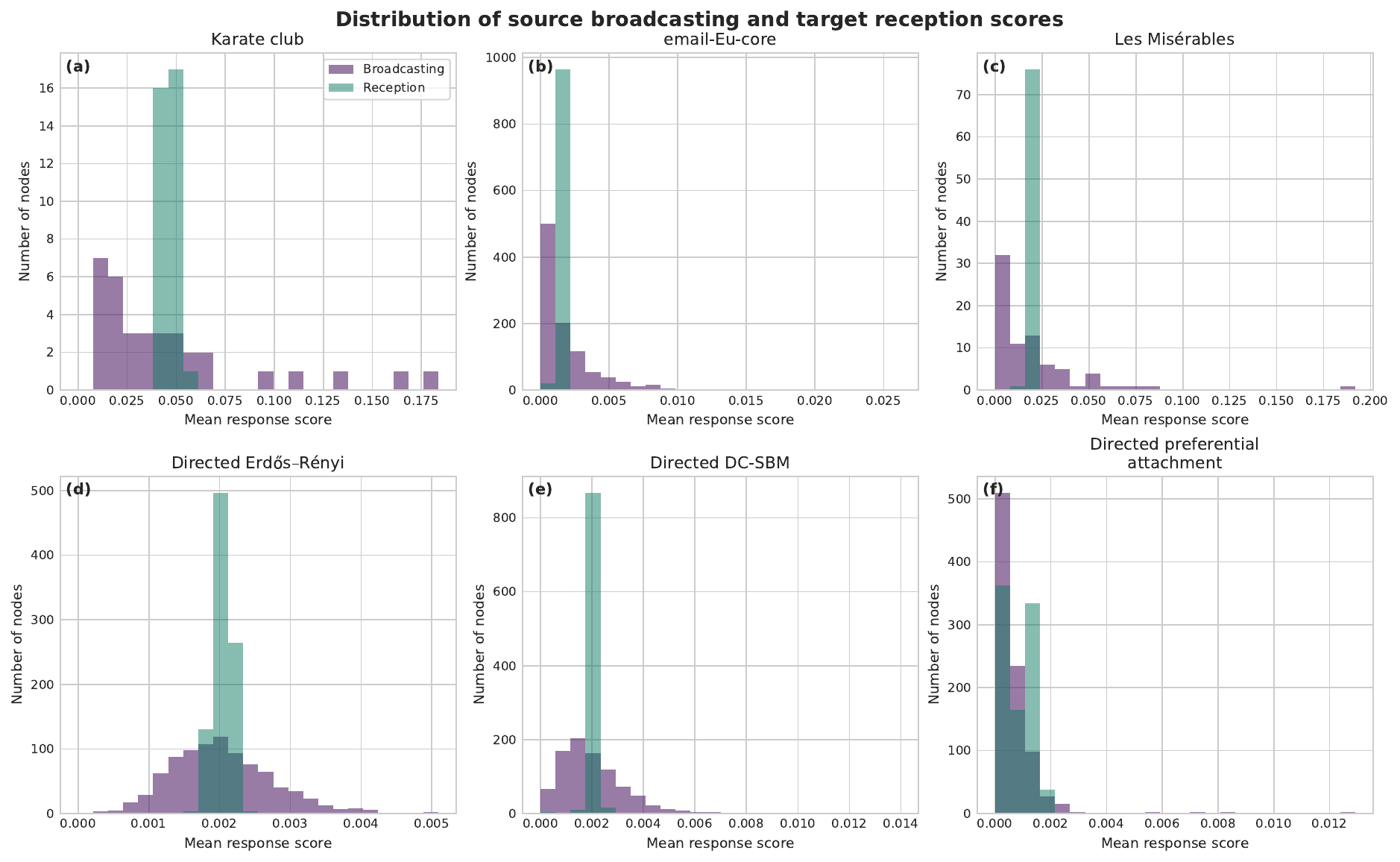}
\caption{Distribution of mean broadcasting and reception scores across the six original networks. The diagonal source response is excluded from both scores.}
\label{fig:response-score-distributions}
\end{figure}

\subsection*{S2. Robustness analyses for the six original networks}

For the six original small- and medium-network cases, we assessed ranking robustness by varying the non-source susceptibility distribution and the placement of neutral background boundary agents. The baseline profiles use $\mathrm{Beta}(4,2)$ draws rescaled to $[0.15,0.90]$. The two alternatives use $\mathrm{Beta}(2,4)$ and $\mathrm{Beta}(6,2)$ draws on the same range. For each distribution, twelve independently generated profiles were evaluated. In a separate boundary-placement analysis, zero-valued background agents were fixed at the lowest or highest weighted out-strength vertices, at one or five percent of the network, while the selected source remained a unit fixed source in its own experiment. All score summaries use the unthresholded response matrix ($\tau=0$); no edge perturbation was applied.

Table~\ref{tab:reviewer2-susceptibility-sensitivity} reports Spearman rank correlations and top-ten overlap with the baseline score rankings across the six original cases. Broadcasting rankings are highly stable under both alternative distributions ($\rho_B=0.979$--$1.000$; top-ten overlap $0.70$--$1.00$). Reception rankings vary more strongly with the susceptibility specification ($\rho_Q=-0.086$--$0.983$; top-ten overlap $0.00$--$0.50$). Table~\ref{tab:reviewer2-boundary-sensitivity} and Figure~\ref{fig:reviewer2-boundary-stability} show the corresponding neutral-boundary results. Broadcasting remains comparatively stable across the tested placements ($\rho_B=0.925$--$1.000$; top-ten overlap $0.50$--$1.00$), whereas reception has a wider response range ($\rho_Q=0.375$--$0.999$; top-ten overlap $0.30$--$1.00$). Thus, the source-oriented broadcasting result is robust across these specification checks, while target-oriented reception rankings should be interpreted as conditional on the susceptibility and boundary treatment.

\begin{table*}[p]
\centering
\caption{Ranking stability under alternative susceptibility distributions across the six original networks. Each reported score is a mean across 12 independently generated profiles. $\rho_B$ and $\rho_Q$ are Spearman correlations with the baseline $\mathrm{Beta}(4,2)$ ranking; overlap is the fraction of common nodes among the top ten.}
\label{tab:reviewer2-susceptibility-sensitivity}
\scriptsize
\resizebox{\textwidth}{!}{%
\begin{tabular}{llrrrr}
\hline
Network & Susceptibility condition & $\rho_B$ & Top-10 $B$ overlap & $\rho_Q$ & Top-10 $Q$ overlap  \\
\hline
Karate club & Lower $\mathrm{Beta}(2,4)$ & 0.979 & 0.80 & -0.086 & 0.30  \\
Karate club & Higher $\mathrm{Beta}(6,2)$ & 0.995 & 1.00 & 0.448 & 0.40  \\
email-Eu-core & Lower $\mathrm{Beta}(2,4)$ & 0.997 & 0.90 & 0.081 & 0.00  \\
email-Eu-core & Higher $\mathrm{Beta}(6,2)$ & 1.000 & 0.90 & 0.087 & 0.00  \\
Les Mis\'erables & Lower $\mathrm{Beta}(2,4)$ & 0.987 & 1.00 & 0.092 & 0.20  \\
Les Mis\'erables & Higher $\mathrm{Beta}(6,2)$ & 0.995 & 1.00 & 0.127 & 0.20  \\
Directed Erd\H{o}s--R\'enyi & Lower $\mathrm{Beta}(2,4)$ & 0.989 & 0.70 & 0.004 & 0.00  \\
Directed Erd\H{o}s--R\'enyi & Higher $\mathrm{Beta}(6,2)$ & 0.996 & 0.80 & -0.003 & 0.00  \\
Directed DC-SBM & Lower $\mathrm{Beta}(2,4)$ & 0.991 & 0.90 & 0.110 & 0.10  \\
Directed DC-SBM & Higher $\mathrm{Beta}(6,2)$ & 0.998 & 1.00 & 0.097 & 0.00  \\
Directed preferential attachment & Lower $\mathrm{Beta}(2,4)$ & 0.990 & 1.00 & 0.944 & 0.00  \\
Directed preferential attachment & Higher $\mathrm{Beta}(6,2)$ & 0.997 & 1.00 & 0.983 & 0.50  \\
\hline
\end{tabular}%
}
\end{table*}

\begin{table*}[p]
\centering
\caption{Sensitivity to neutral background boundary agents across the six original networks. Background nodes have fixed zero opinions; source nodes are still assigned a unit fixed opinion in their source-specific experiment. Scores are compared with the source-only baseline using the same 12 baseline susceptibility profiles.}
\label{tab:reviewer2-boundary-sensitivity}
\scriptsize
\resizebox{\textwidth}{!}{%
\begin{tabular}{llrrrr}
\hline
Network & Background boundary placement & $\rho_B$ & Top-10 $B$ overlap & $\rho_Q$ & Top-10 $Q$ overlap  \\
\hline
Karate club & Lowest out-strength, 1\% ($k=1$) & 1.000 & 1.00 & 0.976 & 0.90  \\
Karate club & Highest out-strength, 1\% ($k=1$) & 0.947 & 0.80 & 0.597 & 0.80  \\
Karate club & Lowest out-strength, 5\% ($k=2$) & 0.999 & 1.00 & 0.888 & 0.90  \\
Karate club & Highest out-strength, 5\% ($k=2$) & 0.978 & 0.80 & 0.375 & 0.50  \\
email-Eu-core & Lowest out-strength, 1\% ($k=10$) & 1.000 & 1.00 & 0.989 & 1.00  \\
email-Eu-core & Highest out-strength, 1\% ($k=10$) & 0.998 & 1.00 & 0.727 & 0.40  \\
email-Eu-core & Lowest out-strength, 5\% ($k=49$) & 0.999 & 1.00 & 0.888 & 0.80  \\
email-Eu-core & Highest out-strength, 5\% ($k=49$) & 0.988 & 0.90 & 0.436 & 0.30  \\
Les Mis\'erables & Lowest out-strength, 5\% ($k=4$) & 0.996 & 1.00 & 0.943 & 1.00  \\
Les Mis\'erables & Highest out-strength, 5\% ($k=4$) & 0.968 & 0.90 & 0.398 & 0.30  \\
Les Mis\'erables & Lowest out-strength, 1\% ($k=1$) & 1.000 & 1.00 & 0.999 & 1.00  \\
Les Mis\'erables & Highest out-strength, 1\% ($k=1$) & 0.983 & 0.80 & 0.624 & 0.60  \\
Directed Erd\H{o}s--R\'enyi & Lowest out-strength, 1\% ($k=9$) & 0.999 & 1.00 & 0.962 & 0.90  \\
Directed Erd\H{o}s--R\'enyi & Highest out-strength, 1\% ($k=9$) & 0.983 & 0.80 & 0.864 & 0.80  \\
Directed Erd\H{o}s--R\'enyi & Lowest out-strength, 5\% ($k=45$) & 0.988 & 0.90 & 0.795 & 0.70  \\
Directed Erd\H{o}s--R\'enyi & Highest out-strength, 5\% ($k=45$) & 0.938 & 0.50 & 0.612 & 0.40  \\
Directed DC-SBM & Lowest out-strength, 1\% ($k=9$) & 1.000 & 1.00 & 0.969 & 1.00  \\
Directed DC-SBM & Highest out-strength, 1\% ($k=9$) & 0.986 & 0.90 & 0.840 & 0.40  \\
Directed DC-SBM & Lowest out-strength, 5\% ($k=45$) & 0.996 & 0.90 & 0.851 & 0.90  \\
Directed DC-SBM & Highest out-strength, 5\% ($k=45$) & 0.947 & 0.70 & 0.592 & 0.30  \\
Directed preferential attachment & Lowest out-strength, 5\% ($k=45$) & 0.937 & 1.00 & 0.935 & 0.80  \\
Directed preferential attachment & Lowest out-strength, 1\% ($k=9$) & 0.991 & 1.00 & 0.986 & 1.00  \\
Directed preferential attachment & Highest out-strength, 1\% ($k=9$) & 0.992 & 1.00 & 0.900 & 0.60  \\
Directed preferential attachment & Highest out-strength, 5\% ($k=45$) & 0.925 & 1.00 & 0.809 & 0.30  \\
\hline
\end{tabular}%
}
\end{table*}

\begin{figure*}[p]
\centering
\includegraphics[width=0.98\textwidth]{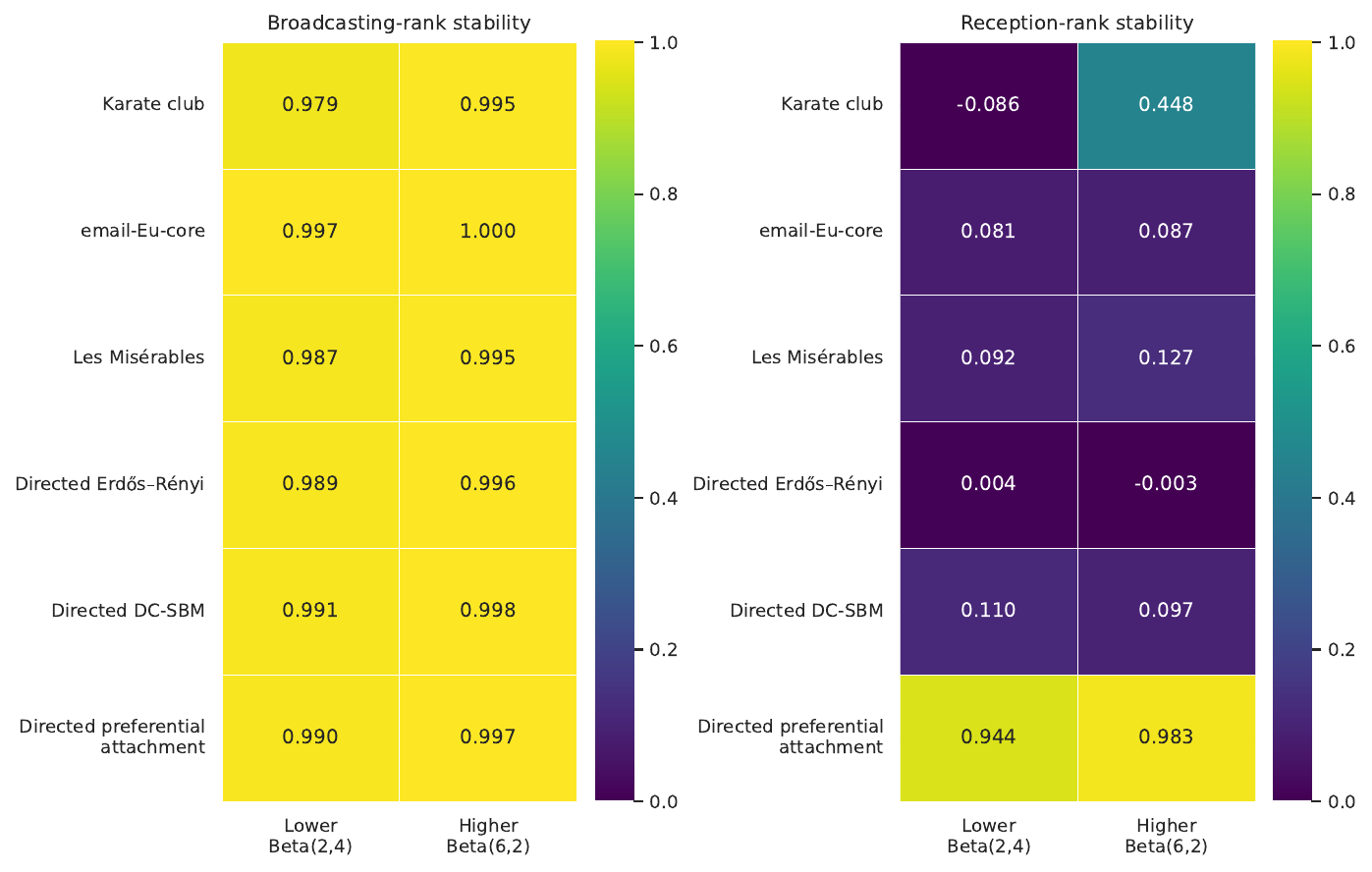}
\caption{Sensitivity of the mean broadcasting and reception rankings to alternative susceptibility distributions across the six original networks. Each annotation is the Spearman correlation with the baseline $\mathrm{Beta}(4,2)$ ranking after twelve independently generated profiles per condition.}
\label{fig:reviewer2-susceptibility-stability}
\end{figure*}

\begin{figure*}[p]
\centering
\includegraphics[width=0.98\textwidth]{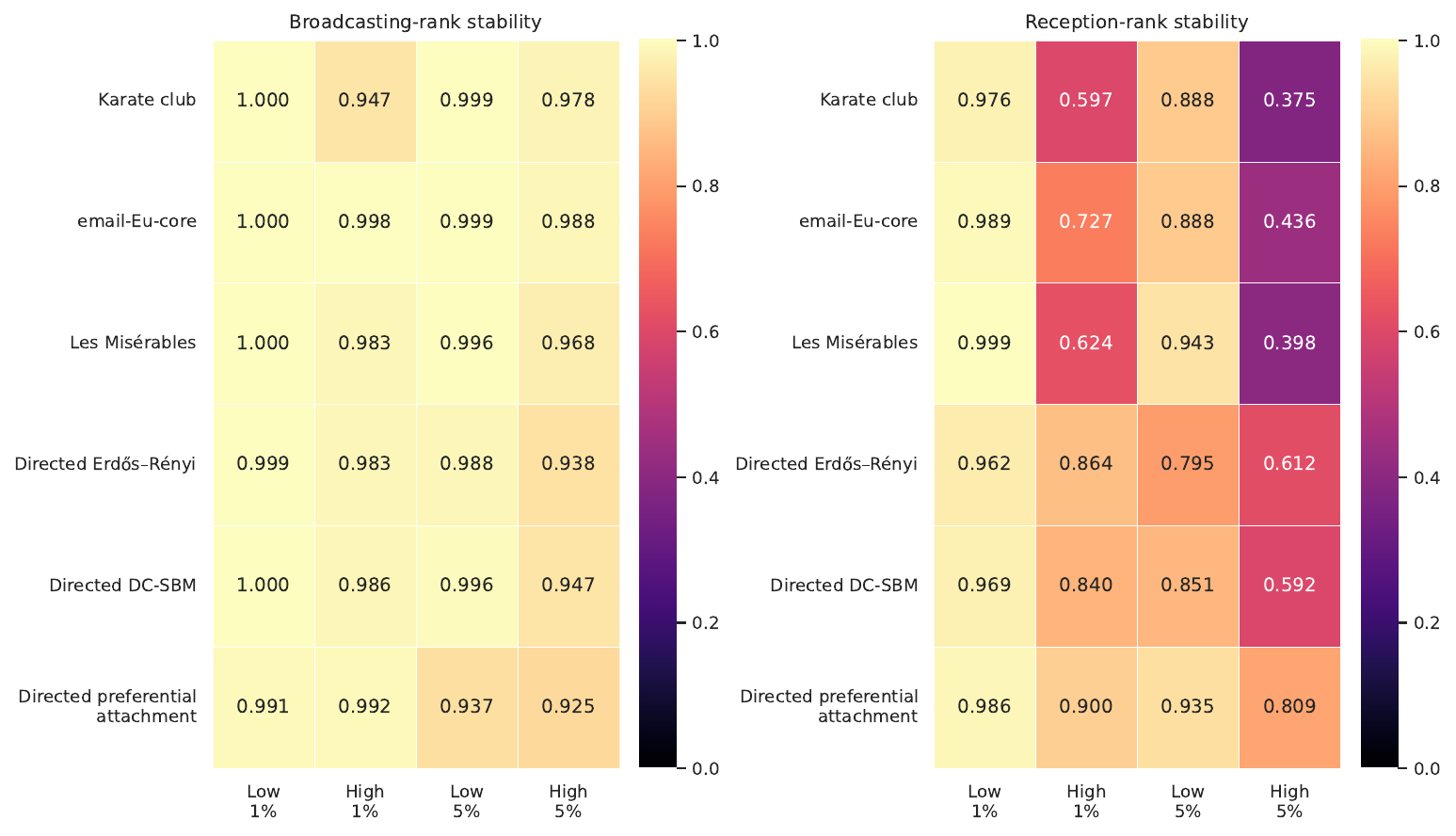}
\caption{Sensitivity of the mean broadcasting and reception rankings to neutral background boundary placement across the six original networks. ``Low'' and ``high'' denote the lowest and highest weighted out-strength vertices, respectively. Each annotation is the Spearman correlation with the source-only baseline after twelve baseline susceptibility profiles.}
\label{fig:reviewer2-boundary-stability}
\end{figure*}

Table~\ref{tab:reviewer2-performance} records the computational cost of the six-original-network alternate-distribution and boundary-placement robustness suite. Each solve computes one Green operator and therefore recovers all source responses for a single susceptibility and boundary condition through $R_{ab}=G_{ba}/G_{aa}$. Each of the six original networks required 84 recorded Green-operator solves, for 504 solves in total. On the largest network in that dense robustness suite, email-Eu-core ($n=986$), the 84 recorded solves required 8.653 seconds in total, with a mean solve time of 0.103009 seconds and maximum residual $2.22\times 10^{-15}$. These timings are hardware- and implementation-specific rather than asymptotic complexity claims. The expanded eight-network analysis uses the exact sparse method described in Section~\ref{subsec:evaluation-design}; its input hashes, solver residuals, and run metadata are recorded in the reproducibility materials.

\begin{table*}[p]
\centering
\caption{Observed computational cost of the six-original-network sensitivity analysis. A solve denotes one Green-operator computation for one susceptibility profile and boundary condition. Residuals are $\lVert (I-SW)G-I\rVert_{\max}$.}
\label{tab:reviewer2-performance}
\scriptsize
\begin{tabular}{lrrrrrr}
\hline
Network & $n$ & arcs & solves & mean s & total s & max residual  \\
\hline
Karate club & 34 & 156 & 84 & 0.000108 & 0.009 & 8.88e-16  \\
email-Eu-core & 986 & 24929 & 84 & 0.103009 & 8.653 & 2.22e-15  \\
Les Mis\'erables & 77 & 508 & 84 & 0.001209 & 0.102 & 6.66e-16  \\
Directed Erd\H{o}s--R\'enyi & 900 & 8917 & 84 & 0.098330 & 8.260 & 1.33e-15  \\
Directed DC-SBM & 900 & 9172 & 84 & 0.095602 & 8.031 & 1.67e-15  \\
Directed preferential attachment & 900 & 3172 & 84 & 0.094028 & 7.898 & 1.67e-15  \\
\hline
\end{tabular}
\end{table*}

\section*{Availability of data and materials}
The processed network inputs and numerical outputs supporting the conclusions of this article are archived at Zenodo: \url{https://doi.org/10.5281/zenodo.21716074}. Third-party source datasets are identified in the repository documentation and are obtained from their original providers.

\section*{Code availability}
The code used to generate the analyses, tables, and figures is archived at Zenodo (\url{https://doi.org/10.5281/zenodo.21716074} ) \citep{Boudourides2026FJCentralities}. The maintained source repository is available at \url{https://github.com/mboudour/BV_Friedkin_Johnsen_Infuence_Dynamics_and_Centralities}.

\section*{Declarations}

\subsection*{Ethics approval and consent to participate}

Not applicable.

\subsection*{Consent for publication}

Not applicable.

\subsection*{Competing interests}

The author declares that they have no competing interests.

\subsection*{Funding}

The author received no specific funding for this work.

\subsection*{Authors' contributions}

The author conceived the study, developed the theoretical results, designed and conducted the computational analyses, and wrote the manuscript.

\subsection*{Acknowledgements}

The author gratefully acknowledges Sergios Lenis for carrying out the first computations that initiated this project.

\bibliography{references}

\end{document}